 \newtheorem{case}{Case}
 \newtheorem{observation}{Observation}
\begin{document}

\setcounter{page}{1}
\publyear{22}
\papernumber{2128}
\volume{187}
\issue{1}

  \finalVersionForARXIV

\title{Gathering over Meeting Nodes in Infinite Grid\footnote{A preliminary version of this paper appeared in the 6th International Conference, CALDAM 2020, Hyderabad, India, February 13-15, 2020, }}

\author{Subhash Bhagat\\
D\'epartement d’informatique\\
Universit\'e du Qu\'ebec en Outaouais, Canada \\
subhash.bhagat.math@gmail.com
\and Abhinav Chakraborty\thanks{Address for correspondence:  Advanced Computing and Microelectronics Unit,
                   Indian Statistical Institute,  Kolkata, India. \newline \newline
          \vspace*{-6mm}{\scriptsize{Received December 2021; \ accepted August  2022.}}},  Bibhuti Das, Krishnendu Mukhopadhyaya
\\
Advanced Computing and Microelectronics Unit \\
 Indian Statistical Institute,  Kolkata, India \\
\{abhinav.chakraborty06, dasbibhuti905,  krishnendu.mukhopadhyaya\}@gmail.com}

	\maketitle

\runninghead{S.Bhagat et al.}{Gathering over Meeting Nodes in Infinite Grid}
		
		\begin{abstract}
			
			The \textit{gathering over meeting nodes} problem asks the robots to gather at one of the pre-defined \textit{meeting nodes}. The robots are deployed on the nodes of an anonymous two-dimensional infinite grid, which has a subset of nodes marked as \textit{meeting nodes}. Robots are identical, autonomous, anonymous and oblivious. They operate under an \textit{asynchronous} scheduler. They do not have any agreement on a \textit{global coordinate system}. All the initial configurations for which the problem is deterministically unsolvable have been characterized. A deterministic distributed algorithm has been proposed to solve the problem for the remaining configurations. The efficiency of the proposed algorithm is studied in terms of the number of moves required for \textit{gathering}. A lower bound concerning the total number of moves required to solve the \textit{gathering} problem has been derived.
			\end{abstract}
		
		\begin{keywords}
		Swarm Robotics, Gathering, Meeting Nodes, Asynchronous, Look-Compute-Move Cycle
			
		\end{keywords}

\section{Introduction}
		
 A swarm of robots consists of small and inexpensive robots that work together in a cooperative environment to execute some complex tasks. A considerable amount of research in distributed computing has been focused on robot-based computing systems because of their importance in a wide range of real-world applications, like search and rescue operations, military operations, disaster management, cleaning a big surface, etc. In robot-based computing systems, {\it gathering} is one of the most widely studied problems. The {\it gathering} problem asks the mobile robots, which are initially placed at distinct locations, to gather at a common location. The common location is not fixed a priori, and the \textit{gathering} should be finalized within a finite amount of time. The study of the \textit{gathering} problem aims to find the minimal amount of capabilities required to solve the problem. It has been extensively studied in both the continuous \cite{DBLP:conf/icalp/CieliebakFPS03,DBLP:conf/soda/AgmonP04,  DBLP:journals/jda/BhagatCM16} and discrete domains \cite{DBLP:journals/tcs/KlasingMP08,DBLP:journals/tcs/KlasingKN10,DBLP:journals/tcs/DAngeloSKN16,DBLP:journals/iandc/StefanoN17}. In the discrete environment, the robots are deployed on the nodes of an anonymous graph.
 	
 \medskip
 In this paper, we have considered the {\it gathering over meeting nodes} problem in an infinite square grid. The robots and the \textit{meeting nodes} are deployed on the nodes of the grid. We assume the robots to be:
	\begin{itemize}		
\itemsep=0.9pt
		\item \textit{Anonymous}: No unique identifiers.
		\item \textit{Autonomous}: No central controller.
		\item \textit{Identical}: Indistinguishable by their appearance.
		\item \textit{Homogeneous}: All of them execute the same deterministic algorithm.
		\item \textit{Oblivious}: No memory of past information.
		\item \textit{Silent}: No explicit means of direct communication.
		\item \textit{Disoriented}: No access to a \textit{global coordinate system}, no common compass and no agreement on chirality.
		\item \textit{Unlimited visibility}: Can perceive the entire graph.
 \end{itemize}
	
	\noindent When a robot becomes active, it operates according to \textit{Look-Compute-Move (LCM)} cycle. In the \textit{look} phase, a robot takes a snapshot of the current configuration. Based on this snapshot, it computes a destination node in the \textit{compute} phase. Note that the destination node may be its current position as well. A robot moves towards its destination node in the \textit{move} phase. If the destination node is the current position of the robot, then the robot performs a \textit{null movement}. The topology considered in this paper is an anonymous grid graph, i.e., the nodes and the edges of the input grid graph are unlabeled. In the initial configuration, the robots are placed at distinct nodes of the grid. The input graph also comprises some pre-defined fixed nodes, which are referred to as \textit{meeting nodes}. The \textit{meeting nodes} are visible to the robots during the \textit{look} phase, and they occupy distinct nodes of the grid. In the initial configuration, we assume that a robot may be deployed on a \textit{meeting node}.

\medskip	
 Based on the activation and timing of the robots, the following schedulers are considered in the literature.
	\begin{enumerate}
		\item \textit{Fully-Synchronous} (FSYNC): In the \textit{fully synchronous} (FSYNC) setting, all the robots are activated simultaneously. The activation phase of all the robots can be divided into global rounds.
		\item \textit{Semi-synchronous} (SSYNC): In this setting, a subset of robots are activated simultaneously. The activation phase of each such robot can be divided into global rounds.
		\item \textit{Asynchronous} (ASYNC): In the \textit{asynchronous} (ASYNC) setting, there is no common notion of time. The duration of each \textit{Look, Compute} and \textit{Move} phases is finite, but unpredictable. \end{enumerate}
In this paper, we have assumed that the cycles are performed \textit{asynchronously} by each robot. We assume the scheduler to be fair, i.e., each robot performs its \textit{Look-Compute-Move} (LCM) cycle within finite time and infinitely often.
	
\noindent During the \textit{Look} phase, the robots have \textit{multiplicity detection capability}. The \textit{global-strong multiplicity detection} capability of a robot allows a robot to count the actual number of robots in each node. If the robots have \textit{global-weak multiplicity detection} capability, they can only detect whether multiple robots occupy a node. They cannot count the exact number of robots composing the multiplicity. If the robots are endowed with \textit{local multiplicity detection capability} instead of the global version, then they can detect whether there is a multiplicity or the number of robots composing the multiplicity only in their current location node. In this paper, we assumed the robots to have \textit{local-weak multiplicity detection} capability, i.e., the robots can detect whether there exists a multiplicity in their current location node.

\subsection{Motivation}
 In this paper, we investigate the \textit{gathering} problem in an infinite grid, where some of the nodes are designated as \textit{meeting nodes}. In this setting, the movements of the robots are only allowed along the grid lines and the robots need to gather at one of the \textit{meeting nodes}. The fundamental motivation behind studying the {\it gathering over meeting nodes} problem in infinite grids is to investigate the solvability of the {\it gathering} problem where both the movements of the robots and the \textit{gathering} points are restricted.	
 \begin{enumerate}
     \item Gathering \cite{DBLP:journals/tcs/KlasingMP08,DBLP:journals/tcs/KlasingKN10,DBLP:journals/tcs/DAngeloSKN16,DBLP:journals/iandc/StefanoN17} problem has been studied in the discrete domain, where the movements of the robots are restricted along the edges. However, the gathering point was not restricted. The rationale behind considering the \textit{meeting nodes} might be of practical use. In general, the {\it gathering problem} requires the robots to coordinate their movements and meet at a location that they are not aware of beforehand. However, the \textit{gathering} may be limited to some specific regions or points. Another possibility is that robots may need to gather at one of the designated points in many real-life applications, e.g., one of the base stations or charging stations, etc.
     \item Cicerone et al. \cite{DBLP:journals/dc/CiceroneSN18} have studied the {\it gathering on meeting points} problem in the Euclidean plane. Though the gathering points are restricted, the robots are free to move throughout the plane. In the continuous domain, it is assumed that the robots move with high accuracy and infinite precision. In specific models, the robots can even perform \textit{guided} movements, i.e., they can move along some specified curve \cite{DBLP:journals/jpdc/PattanayakMRM19,DBLP:journals/dc/CiceroneSN19}. Moreover, the robots can move even by infinitesimally small amounts. Even if the field of robot deployment is small, a dimensionless robot can move without creating any collision. The correctness of the algorithms relies on the accurate execution of the movements. However, the vision sensors do not have infinite precision for real-life robots with weak mechanical capabilities. In the continuous domain, a robot can travel an amount of distance that may be an irrational number. In practice, it may not always be possible to perform such infinitesimal movements with infinite precision. This motivates us to consider the problem in a grid-based terrain where the movements are restricted along the grid lines, and a robot can move to one of its neighbors in one step. However, we have assumed that the movement of a robot is instantaneous, i.e., during the \textit{look} phase of each robot, the other robots are always detected on the nodes of the input grid graph. Thus, consideration of discrete environments recognizes the fact that many vision sensors produce digital and therefore, discrete snapshots of the environment. The grid topology is a natural discretization of the plane. It has numerous applications in real-life robot navigation systems, such as industrial Automated Guided Vehicles \cite{DBLP:conf/icra/BarberaQIG03} and Coverage Path Planning \cite{DBLP:conf/atal/SharmaDK19}, where grid type floor layouts can be suitably implemented. The restrictions imposed by the grid model on the movements of the robots make it difficult to design algorithms, as opposed to the movement of the robots in a continuous environment.
 \end{enumerate}

		\subsection{Earlier works}
	\textit The \textit{gathering} problem has been extensively studied in the literature \cite{ DBLP:conf/icalp/CieliebakFPS03,DBLP:conf/soda/AgmonP04,  DBLP:journals/jda/BhagatCM16,DBLP:series/lncs/Flocchini19,DBLP:series/synthesis/2012Flocchini}. In the discrete domain, the robots are deployed on the nodes of the input graph. \textit{Gathering} in the discrete domain has been largely studied in rings \cite {DBLP:journals/tcs/KlasingMP08,DBLP:journals/tcs/KlasingKN10,DBLP:journals/jda/DAngeloSN14,DBLP:journals/dc/DAngeloSN14,10.1007/978-3-642-13284-1_9,10.1007/978-3-642-32589-2_48}. Klasing et al. \cite {DBLP:journals/tcs/KlasingMP08} studied the \textit{gathering} problem in an anonymous ring and proved that \textit{gathering} in an anonymous ring is impossible without \textit{multiplicity detection capability}. With the assumption of \textit{global-weak multiplicity detection capability}, they proposed a distributed algorithm to solve the \textit{gathering} problem for all the configurations having an odd number of robots and all asymmetric configurations when the number of robots is even. Klasing et al. \cite{DBLP:journals/tcs/KlasingKN10} studied configurations in an anonymous ring which admits symmetries and having an even number of robots. They solved the problem for all configurations with more than eighteen robots. They proved that, for an odd number of robots, \textit{gathering} is feasible if and only if the configuration is not periodic. Kamei et al. \cite{10.1007/978-3-642-32589-2_48} studied the \textit{gathering} problem in anonymous rings using \textit{local weak multiplicity detection capability}. D'Angelo et al. \cite{DBLP:journals/jda/DAngeloSN14} considered the \textit{gathering} problem on anonymous rings with six robots. They proposed a distributed algorithm to solve the problem that assumes \textit{global-weak multiplicity detection} capability of the robots.
	
 Cicerone et al. \cite{DBLP:journals/tcs/CiceroneSN21}, \nocite{DBLP:conf/sirocco/CiceroneSN19,DBLP:conf/sss/CiceroneSN19,DBLP:conf/caldam/BhagatCDM20} studied the \textit{gathering} problem in complete bipartite graphs under FSYNC scheduler. They considered dense and symmetric graphs like complete graphs and complete bipartite graphs. They characterized the solvability of \textit{gathering} in such graphs. Bose et al. \cite{10.1007/978-3-030-14094-6_7} considered the \textit{gathering} problem in hypercubes. They proposed an optimal algorithm, which minimizes the total number of moves by all the robots.
	
 D'Angelo et al. \cite{DBLP:journals/tcs/DAngeloSKN16}, studied the {\it gathering} problem on trees and finite grids. They proved that even with \textit{global-strong multiplicity detection} capability, a configuration remains ungatherable if and only if it is periodic or symmetric, with the line of symmetry passing through the edges of the grid. For the remaining configurations, they solved the problem without assuming any \textit{multiplicity detection capability} of the robots. Stefano et al. \cite{DBLP:journals/dc/StefanoN17}, studied the optimal {\it gathering} of robots in anonymous graphs. They also studied the optimal \textit{gathering} problem in infinite grids \cite{DBLP:journals/iandc/StefanoN17}. In \cite{DBLP:journals/iandc/StefanoN17}, they proposed a deterministic distributed algorithm that minimizes the total distance traveled by all the robots. This paper also introduced the concept of \textit{Weber-point} on vertex-weighted graphs. A \textit{Weber-point} is a node of the graph that minimizes the sum of the length of the shortest paths from it to each robot.
	
 Fujinaga et al. \cite{10.1007/978-3-642-17653-1_1} introduced the concept of \textit{fixed points} or \textit{landmarks} on the \textit{Euclidean plane}. The \textit{landmarks covering problem} requires the robot to reach a configuration where all the robots must occupy a unique \textit{fixed} point or a \textit{landmark}. They propose an algorithm that assumes common chirality among the robots. The proposed algorithm minimizes the total distance traveled by all the robots. Cicerone et al. \cite{Cicerone2019} studied the \textit{embedded pattern formation problem} without assuming any common chirality. The problem asks for a distributed algorithm that requires the robots to occupy all the \textit{fixed points} within a finite amount of time. Each \textit{fixed point} must be occupied by exactly one robot. The \textit{$k$-circle formation} \cite{BhagatDCM21,DasCBM22} problem asks a set of robots to form disjoint circles having $k$ robots each at distinct locations. The circles are centered at the set of fixed points. Cicerone et al. \cite{DBLP:journals/dc/CiceroneSN18}, studied a variant of the {\it gathering} problem on the \textit{Euclidean plane}, where robots must gather at one of the pre-determined points, referred to as \textit{meeting points}. They defined the problem as {\it gathering on meeting points problem}. They proposed a deterministic algorithm that minimizes the total distance traveled by all the robots and minimizes the maximum distance traveled by a single robot. The proposed algorithms assume \textit{global-weak multiplicity detection capability} of the robots. In our paper, we have proposed a deterministic algorithm that assumes \textit{local-weak multiplicity detection} of the robots.

		\subsection{Our contributions}

	This paper considers \textit{gathering over meeting nodes} problem in an infinite grid by \textit{asynchronous} oblivious mobile robots. We have shown that even if the robots are endowed with \textit{multiplicity detection capability}, some configurations remain ungatherable. It includes the following collection of configurations:
	\begin{enumerate}
	    \item The configurations admitting a unique line of symmetry such that the line of symmetry does not contain any robots or {\it meeting nodes}.
	    \item The configurations admitting rotational symmetry with no robots or {\it meeting nodes} on the center of rotation.
	\end{enumerate}
We have proposed a deterministic distributed algorithm to solve the {\it gathering} problem for the remaining configurations. We have studied the efficiency of the proposed algorithm in terms of the total number of moves executed by the robots. A lower bound has been derived concerning the total number of movements performed by any algorithm for solving the \textit{gathering over meeting nodes} problem. We have proved that any algorithm that solves the \textit{gathering over meeting nodes} problem requires $\Omega(Dn)$ moves, where $D$ is the larger side of the initial minimum enclosing rectangle of all the robots and \textit{meeting nodes} and $n$ is the number of robots. Our proposed algorithm requires $O(Dn)$ moves, i.e., the algorithm is asymptotically optimal.

	\subsection{Outline}

	The following section focuses on the robot model and provides some preliminary definitions and notations. These definitions and notations are relevant in understanding the problem definition. Section \ref{s3} provides the formal description of the \textit{gathering} problem. A sufficient condition for the solvability of the \textit{gathering} task has been stated in Section \ref{s3}. In Section \ref{s4}, we have provided a deterministic distributed algorithm that solves the \textit{gathering over meeting nodes}. Section \ref{s6} provides a lower bound to the complexity of the \textit{gathering} problem in terms of the number of moves executed by the robots to finalize the \textit{gathering}. In this section, we have provided a complexity analysis for our proposed algorithm. Finally, Section \ref{s7} concludes the paper with some future directions to work with.

	\section{Model and definitions}\label{s2}

	\subsection{Model}	
	Robots are assumed to be autonomous, anonymous, homogeneous, dimensionless and oblivious. They do not have explicit means of communication. They have an unlimited and unobstructed visibility range, i.e., each robot can observe the entire grid. The robots do not have any agreement on a \textit{global coordinate system} and chirality. Each robot perceives the configuration with respect to its local coordinate system with origin as its current position. Initially, the robots are assumed to be on the distinct nodes of the input grid. Each active robot executes \textit{Look-Compute-Move(LCM)} cycle under an \textit{asynchronous scheduler}. A robot can instantly move to one of its adjacent nodes along the grid lines. The movement of a robot is instantaneous, i.e., any robot performing a \textit{Look} operation observes all the other robot's positions only at the nodes of the input grid graph.

	\subsection{Definitions}
	In this subsection, we have proposed some terminologies and definitions.
	\begin{itemize}
		\item \textbf{System Configuration:}
		\begin{itemize}
			\item $P=(\mathbb{Z}$, $E')$: infinite path graph where the vertex set corresponds to the set of integers $\mathbb {Z}$ and the edge set is denoted by the ordered pair $E'=\lbrace (i$, $i+1)| i\in \mathbb{Z}\rbrace$.
			\item \textit{Cartesian product of the graph $P\times P$}: input grid graph.
			\item $V$ and $E$: set of nodes and edges of the input grid graph, respectively.
			\item $d(u,v)$: \textit{Manhattan} distance between the nodes $u$ and $v$.
			\item $R=\lbrace r_1,r_2,\ldots,r_n\rbrace$: a set of robots deployed on the nodes of the grid.
			\item $r_i(t)$: position of the robot $r_i$ at time $t>0$. When there is no ambiguity, $r$ will represent both the robot and the position occupied by it.
			\item $R(t)=\lbrace r_1(t),r_2(t),...,r_n(t)\rbrace$: multiset of robot positions at time $t$. At $t=0$, $r_i(t) \neq r_j(t)$, for all $r_i(t), r_j(t) \in R(t)$. However, at $t> 0$, $r_i(t)$ may be equal to $r_j(t)$, for some $r_i(t), r_j(t) \in R(t)$.
			\item $M=\lbrace m_1,m_2,\ldots,m_s\rbrace$: set of {\it meeting nodes} located at the nodes of the grid graph.
			\item $C(t)=(R(t)$, $M)$: system configuration at time $t$.
		\end{itemize}
		\item\textbf{Symmetry:} An \textit{automorphism} of a graph $G=(V$, $E)$ is a bijective map $\phi:V\rightarrow V$ such that $u$ and $v$ are adjacent if and only if $\phi (u)$ and $\phi (v)$ are adjacent. \textit{Automorphism} of graphs can be extended similarly to define \textit{automorphism} of a configuration. Let $l:V\rightarrow\lbrace 0, 1, 2, 3, 4, 5 \rbrace$ be defined as a function, where:
		\[ l(v)=\begin{cases}
		0 & \text{if} \;v \textrm { is an empty node} \\
		1 & \text{if} \;v \textrm{ is a {\it meeting node}} \\
		2 & \text{if} \;v \textrm{ is a single robot position on a {\it meeting node}}\\
		3 & \text{if} \;v \textrm{ is a robot multiplicity on a \textit{meeting node}} \\
		4 & \text{if} \;v \textrm{ is a single robot position not on any \textit{meeting node}}\\
		5 & \text{if} \;v \textrm{ is a robot multiplicity not on any {\it meeting node}}
		\end{cases}
		\]
		
		An \textit{automorphism} of a configuration $C(t)$ is an \textit{automorphism} $\phi$ of the input grid graph such that $l(v)=l(\phi(v))$ for all $v\in V$. The set of all \textit{automorphisms} of a configuration forms a group which is denoted by $Aut(C(t)$, $l)$. If $\vert Aut(C(t), $ $l)\vert=1$, then the configuration is asymmetric. Otherwise, the configuration is said to be symmetric. Note that the function $l$ denotes the status or type of a node, i.e., $l(v)$ denotes whether the node $v$ is an empty node, a \textit{meeting node} without any robot positions on it, a \textit{meeting node} with a single or multiple robots on it, or a single or multiple robot positions not lying on any \textit{meeting node}. We assume that the grid is embedded in the \textit{Cartesian} plane. Hence, a grid can admit only three types of symmetry, namely, translation, reflection and rotation. Since the number of robots and {\it meeting nodes} is finite, translational symmetry is not possible. A unique line of symmetry characterizes a reflectional symmetry. The line of symmetry can be horizontal, vertical, or diagonal and can pass through the nodes or edges of the graph. The angle of rotation and the center of rotation characterize rotational symmetry. The angle of rotation can be $90^{\circ}$ or $180^{\circ}$, whereas the center of rotation can be a node, a center of an edge, or the center of a unit square.
		\item \textbf{Partitive automorphism:} Given an \textit{automorphism} $\phi$ $\in$ $Aut(C(t)$, $l)$, the cyclic subgroup of order $k$ generated by $\phi$ is given by $\lbrace\phi^{0}, \phi^{1}= \phi, \phi^2= \phi\circ\phi, \ldots, \phi^{k-1} \rbrace$, where $\phi^{0}$ denotes the identity of the cyclic subgroup. Let $H$ be any subgroup of $Aut(C(t),l)$. We define a relation $\rho$ as follows: For some $x$, $y \in V$, we say that $x$ and $y$ are related by the relation $\rho$ if and only if there exists an \textit{automorphism} $\gamma \in H$ such that $\gamma(x)=y$. Note that the relation $\rho$ is an equivalence relation defined on the set of vertices $V$. The equivalence class of the node $x$ is defined as the \textit{orbit} of $x$ \cite{DBLP:journals/iandc/StefanoN17} and is denoted by $H(x)$. These \textit{orbits} form a partition of the set $V$, since they represent disjoint equivalence classes. An \textit{automorphism} $\phi \in Aut (C(t), l)$ is called partitive on $V'' \subset V$, if the cyclic subgroup $H=\lbrace \phi^{0}, \phi^{1}= \phi, \phi^2=\phi\circ\phi,\ldots, \phi^{k-1} \rbrace$ generated by $\phi$ has order $k >1$ and is such that $|H (u)|=k$ for each $u\in V''$.
		
		Suppose a configuration admits a unique line of symmetry $L$ such that $L$ does not pass through any node. Then, there exists an \textit{automorphism} $\phi$ $\in$ $Aut$ $(C(t)$, $l)$ which is partitive on the set of nodes $V''=V$. The cyclic subgroup $H$ generated by $\phi$ with $k=2$ is given by $H=\lbrace \phi^{0}, \phi^{1}\rbrace$. Similarly, assume that a configuration admits rotational symmetry where the center of rotation $c$ is not a node. If the angle of rotation is $90^{\circ}$, then there exists an \textit{automorphism} $\phi$ $\in$ $Aut$ $(C(t)$, $l)$ which is partitive on the set of nodes $V''=V$ and the cyclic subgroup $H$ generated by $\phi$ with $k=4$ is given by $H=\lbrace \phi^{0}, \phi^{1}, \phi^{2}, \phi^{3}\rbrace$.

	\item \textbf{Configuration view}:
	Let $MER$ denote the \textit{minimum enclosing rectangle} of $R \cup M$. $MER$ is defined as the smallest grid-aligned rectangle that contains all the robots and \textit{meeting nodes}. Assume that the dimension of $MER$ is $p\times q$. We define the length of a side of $MER$ in terms of the number of grid edges on them. Let us consider the eight senary strings of length $pq$ associated with the corners of $MER$, where for each corner of $MER$, there are two senary strings defined. A senary string of length $pq$ is constructed as follows: Starting from a corner of $MER$, proceeding in the direction parallel to the width of $MER$ and scanning the entire grid sequentially, we consider all the grid lines of the $MER$ column by column. While scanning the grid, we associate $l(v)$ to each node $v$ that the string encounters. Proceeding similarly, we can define the string associated to the same corner and encounter the nodes of the grid in the direction parallel to the length of the grid. For a corner $i$, let the two strings defined are denoted by $s_{ij}$ and $s_{ik}$. Similarly, two senary strings of length $p q$ are associated with each corner of $MER$.

 \begin{figure}[!b]
				\centering
			{
				\includegraphics[width=0.339\columnwidth]{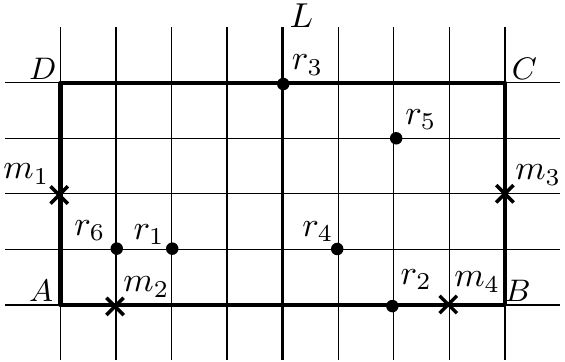}
			}
			\caption{The crosses represent \textit{meeting nodes} and the black circles represent robot positions. $L$ is the line of symmetry for the \textit{meeting nodes}. The lexicographic largest string is $s_{AD}=001001400004000000000000404000400401000000100$. $A$ is the \textit{key corner}. }
			\label{string2}
		\end{figure}
	
 First, consider the case when $MER$ is a non-square rectangle. We can distinguish the two strings associated to a particular corner by considering the string which is in the direction parallel to the side of the minimum length. Consider any particular corner $i$ of $MER$. Assume that $|ij|<|ik|$. The direction parallel to $ij$ is considered as the \textit{string direction} associated to $i$. We define $s_i=s_{ij}$ as the \textit{string representation} associated to the corner $i$. The direction parallel to the larger side (i.e., $s_{ik}$) is defined as the \textit{non-string direction} associated to the corner $i$. In the case of a square grid, between the two strings associated to a corner, the \textit{string representation} is defined as the string which is lexicographically larger, i.e., $s_i=max(s_{ij}, s_{ik})$, where the maximum is defined according to the lexicographic ordering of the strings. Note that both the strings associated with a particular corner are equal if $MER$ is symmetric with respect to the diagonal line of symmetry passing through that corner. Here, one of the directions is arbitrarily selected as the \textit{string direction}. If the configuration is asymmetric, we will always get a unique largest lexicographic string (In Figure~\ref{string2}, $s_{AD}$ is the lexicographic largest string and $AD$ is the \textit{string direction} associated to $A$). Without loss of generality, let $s_{i}$ be the largest lexicographic string among all the strings associated to the corners of $MER$. Then we refer to $i$ as the {\it key corner} (In Figure~\ref{string2}, $A$ is the \textit{key corner}). A corner which is not a \textit{key corner} is defined as a \textit{non-key corner}. The definition of the \textit{key corner} is similar to one defined by Stefano et al. \cite{DBLP:journals/iandc/StefanoN17}. The \textit{configuration view} of a node is defined as the tuple ($d', x$), where $d'$ denotes the distance of a node from the {\it key corner} in the {\it string direction} and $x$ denote the status of the node, i.e., $x=l(v)$.
 			
	\item \textbf{Symmetricity of the set $M$:} We define $MER_F$ as the smallest grid-aligned rectangle that contains all the \textit{meeting nodes}. Define the function $\lambda:V\rightarrow\lbrace 0, 1  \rbrace$ as follows:
		\[ \lambda(v)=\begin{cases}
		0 & \text{if} \;v \textrm { is not a meeting node} \\
		1 & \text{if} \;v \textrm{ is a {\it meeting node}} \\
	
		\end{cases}
		\]
We can define a string $\alpha_{i}$ similar to $s_{i}$. The only difference is that instead of $l(v)$, each node $v$ is associated with $\lambda(v)$. If the \textit{meeting nodes} are asymmetric, then there exists a unique lexicographic largest string $\alpha_i$. If the \textit{meeting nodes} are not asymmetric, then the \textit{meeting nodes} are said to be symmetric. The corner with which the lexicographic largest string $\alpha_i$ is associated is defined as the \textit{leading corner} (In Figure \ref{order1}, $\alpha_{DA}=01000 00100 00000 10100 00000 00010 01000$ is the largest lexicographic string among the $\alpha_i's$. $D$ is the \textit{leading corner}).

	\begin{figure}[!h]
\vspace*{3mm}
				\centering
			{
				\includegraphics[width=0.339\columnwidth]{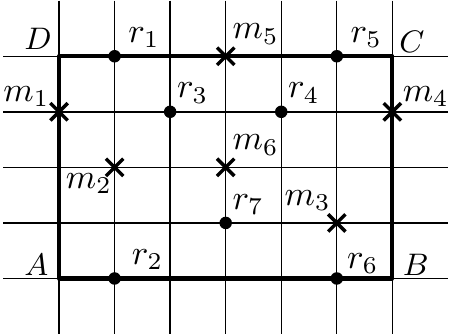}
			}
			\caption{The \textit{meeting nodes} are asymmetric. $D$ is the unique \textit{leading corner}.}
			\label{order1}
		\end{figure}\vspace*{-3mm}
		\end{itemize}
	
		\section{Gathering over meeting nodes problem} \label{s3}

		In this section, we consider the problem definition for \textit{gathering}. A distributed deterministic \textit{gathering} algorithm for \textit{gathering} $n\geq 2$ robots has been proposed. The strategy is to select a single \textit{meeting node} such that all the robots agree on it and gather at that \textit{meeting node} within a finite time. Since the robots are oblivious, the main objective is to maintain the invariance of the \textit{gathering} \textit{meeting node} during the execution of the algorithm.

	\subsection{Problem definition and impossibility results}

		In this subsection, the \textit{gathering over meeting nodes problem} is formally defined in an infinite grid.

		\subsubsection{Problem definition:}

		Given a configuration $C(t)=(R(t)$, $M)$, the {\it gathering over meeting nodes problem} in an infinite grid asks the robots to gather at one of the {\it meeting nodes} within finite time. In an initial configuration, all the robots occupy distinct nodes of the grid. We say a configuration is {\it final} at time $t$ if the following conditions hold:
		\begin{itemize}
		    \item all the robots are on a single \textit{meeting node}.
		    \item each robot is stationary.
		    \item any robot taking a snapshot in the \textit{look} phase at time $t$ will decide not to move.
		\end{itemize}
		Our proposed algorithm is a deterministic distributed algorithm that gathers all the robots at a single \textit{meeting node} within a finite amount of time.

		\subsubsection{Partitioning of the initial configuration:}

		All the configurations can be partitioned into the following disjoint classes.
		\begin{enumerate}[1.]
			\item $\mathcal{I}_1-$ This includes the following class of configurations.
			\begin{enumerate}
			    \item $\mathcal I_{11}-$ $M$ is asymmetric (Figure~\ref{order1}).
			    \item $\mathcal I_{12}-$ $M$ is symmetric with respect to a unique line of symmetry $L$ and there exists at least one \textit{meeting node} on $L$. $R \cup M$ is either asymmetric or symmetric with respect to $L$ (Figure \ref{part2}(a)).
			     \item $\mathcal I_{13}-$ $M$ is symmetric with respect to rotational symmetry with $c$ as the center of rotation and there exists a \textit{meeting node} on $c$. $R \cup M$ is either asymmetric or symmetric with respect to rotational symmetry (Figure \ref{part3}(a)).
			   \end{enumerate}
			   \item $\mathcal{I}_2-$ This includes the following class of configurations.
			
  		\begin{figure}[!h]
  \vspace*{-2mm}
			\centering
					{
				\includegraphics[width=0.270\columnwidth]{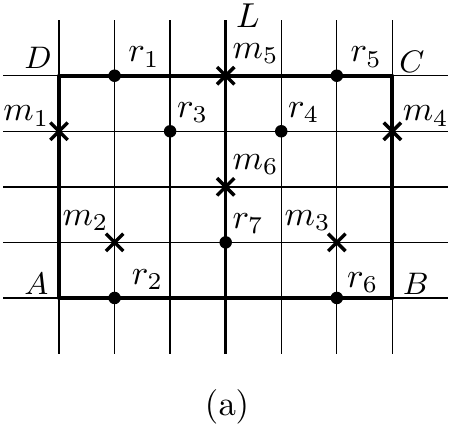}
			}
			\hspace*{0.51cm}
				{
				\includegraphics[width=0.270\columnwidth]{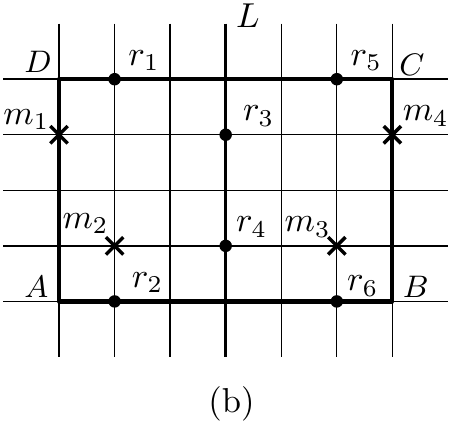}
			}
			\hspace*{0.51cm}
			{
				\includegraphics[width=0.270\columnwidth]{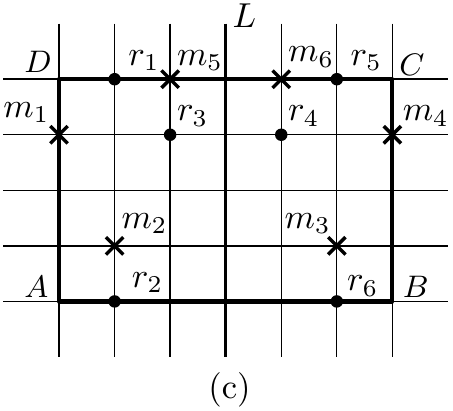}
			}\vspace*{-2mm}
			\caption{(a) $\mathcal{I}_{12}$- configuration. There exist {\it meeting nodes} on $L$. (b) $\mathcal{I}_{31}$- configuration. There exist no {\it meeting nodes} on $L$, but there exists a robot position on $L$. (c) $\mathcal{I}_{41}$- configuration without robots or {\it meeting nodes} on $L$.}
			\label{part2}
		\end{figure}
		
	\begin{figure}[!h]
			\centering
			{
				\includegraphics[width=0.25\columnwidth]{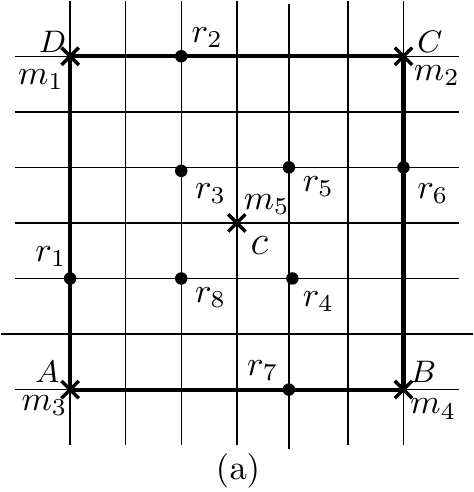}
			}
			\hspace*{0.51cm}
			{
				\includegraphics[width=0.25\columnwidth]{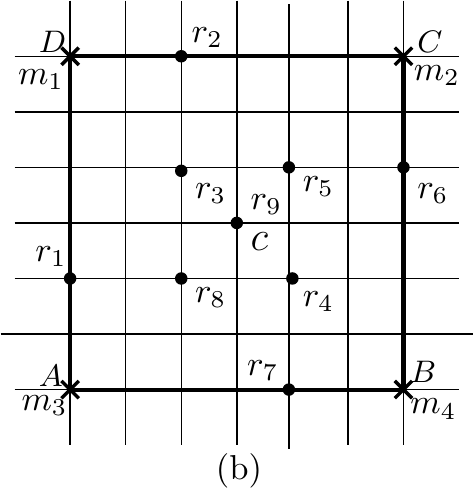}
			}
			\hspace*{0.51cm}
			{
				\includegraphics[width=0.25\columnwidth]{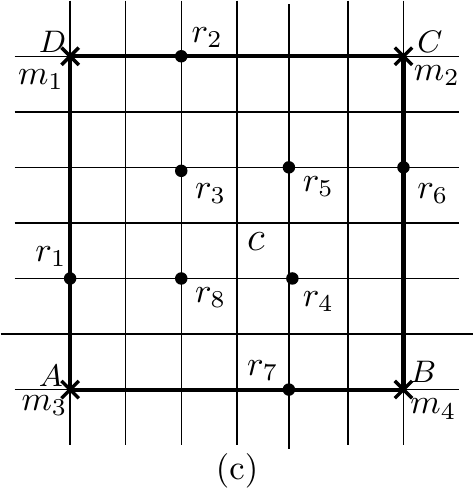}
			}\vspace*{-2mm}
			\caption{(a) $\mathcal{I}_{13}$- configuration with a \textit{meeting node} on $c$. (b) $\mathcal{I}_{32}$- configuration without a {\it meeting node} on $c$, but there exists a robot position on $c$. (c) $\mathcal{I}_{42}$- configuration without a robot or {\it meeting node} on $c$.}
			\label{part3}\vspace*{-3mm}
		\end{figure}

			   \begin{enumerate}
			       \item $\mathcal I_{21}-$ $M$ is symmetric with respect to a unique line of symmetry $L$. $R \cup M$ is asymmetric and there does not exist any \textit{meeting node} on $L$ (Figure~\ref{string2}).
			       \item $\mathcal I_{22}-$ $M$ is symmetric with respect to rotational symmetry. $R \cup M$ is asymmetric and there does not exist a \textit{meeting node} on $c$.
			   \end{enumerate}
			   \item $\mathcal{I}_3-$ This includes the following class of configurations.
			\begin{enumerate}
			    \item $\mathcal I_{31}- $ $M$ is symmetric with respect to a unique line of symmetry $L$. $R\cup M$ is symmetric with respect to $L$. There does not exist any \textit{meeting node} on $L$, but there exists at least one robot position on $L$ (Figure \ref{part2}(b)).
			    \item $\mathcal I_{32}- $ $M$ is symmetric with respect to rotational symmetry with $c$ as the center of rotation. $R\cup M$ is symmetric with respect to rotational symmetry. There does not exist a \textit{meeting node} on $c$, but there exists a robot position on $c$. $R \cup M$ may contain either no line of symmetry or at least one line of symmetry (Figure \ref{part3}(b)).
			\end{enumerate}
			\item  $\mathcal{I}_4-$ This includes the following class of partitive configurations.
			\begin{enumerate}
			    \item $\mathcal I_{41}- $ $M$ is symmetric with respect to a unique line of symmetry $L$. $R\cup M$ is symmetric with respect to $L$ and there does not exist any \textit{meeting node} or robot position on $L$ (Figure \ref{part2}(c)).
			     \item $\mathcal I_{42}- $ $M$ is symmetric with respect to rotational symmetry with $c$ as the center of rotation. $R\cup M$ is symmetric with respect to rotational symmetry and there does not exist any \textit{meeting node} or robot position on $c$. $R \cup M$ may contain either no line of symmetry or at least one line of symmetry (Figure \ref{part3}(c)).
			\end{enumerate}
			\end{enumerate}

	\noindent Let $\mathcal I$ denote the set of all initial configurations. Each time a robot is active, it observes the configuration in its \textit{Look} phase and determines the current class of configuration in which it belongs without any conflict.
	 	
	\subsection{Impossibility result}

	In this subsection, we provide a sufficient condition for solving the \textit{gathering over meeting nodes} problem in an infinite grid.
	\begin{theorem} \label{impo}
	Given an initial configuration $C(0)$, let $V' \subset V$ be a subset of nodes such that $V' \cap R (0)= \emptyset$. If there exists an automorphism $\phi$ that is partitive on the set $V \setminus V'$ and $\phi(v') =v'$, for each $v' \in V'$, then there does not exist any deterministic algorithm which can ensure gathering on a node in $V \setminus V'$.
	\end{theorem}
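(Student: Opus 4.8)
The plan is to argue by contradiction using an adversarial scheduler that preserves the symmetry $\phi$ throughout the entire execution. Suppose some deterministic algorithm $\mathcal{A}$ guarantees gathering on a node of $V \setminus V'$ for the given configuration. Since the asynchronous scheduler is at least as powerful as the fully synchronous one, I would let the adversary play a fully synchronous schedule: in each round every robot executes its \emph{Look} phase on the same snapshot $C(t)$, computes, and then moves simultaneously. It then suffices to exhibit this single schedule under which gathering on $V \setminus V'$ never occurs, since a correct algorithm must succeed against every schedule.

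The core of the argument is the invariant that $\phi \in Aut(C(t), l)$ for every round $t$, proved by induction on $t$. The base case holds because $\phi$ is by hypothesis an automorphism of the configuration $C(0)$. For the inductive step I would use that the robots are anonymous, identical, oblivious and \emph{disoriented}: a robot occupying a node $u$ and the robot occupying $\phi(u)$ perceive snapshots that are identical up to the transformation $\phi$, and since neither robot has access to a global coordinate system or to a common notion of chirality, they cannot distinguish their views. Consequently the common deterministic algorithm assigns them destinations related by $\phi$, i.e. the destination computed at $\phi(u)$ equals $\phi$ applied to the destination computed at $u$. Because each computed destination is an adjacent node (or the node itself) and $\phi$ preserves adjacency, performing all moves simultaneously yields a configuration $C(t+1)$ satisfying $l(\phi(v)) = l(v)$ for all $v \in V$, so $\phi \in Aut(C(t+1), l)$.

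To finish, I would derive the contradiction from the invariant. Suppose at some round gathering is finalized on a node $g \in V \setminus V'$, so that $g$ carries every robot while no other node carries a robot. Applying the invariant, $\phi(g)$ must have the same label as $g$, hence $\phi(g)$ also carries all the robots. But $g \in V \setminus V'$ and $\phi$ is partitive on $V \setminus V'$, so the orbit of $g$ under $\langle \phi \rangle$ has size $k > 1$; in particular $\phi(g) \neq g$. This forces robots to occupy two distinct nodes $g$ and $\phi(g)$ simultaneously, contradicting the assumption that all robots are on the single node $g$. Hence no deterministic algorithm can guarantee gathering on a node of $V \setminus V'$.

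I expect the main obstacle to be the inductive step, specifically the careful justification that symmetric robots compute symmetric destinations. This is precisely where the disorientation hypotheses (no global coordinates, no common compass, no agreed chirality) are indispensable: were $\phi$ a reflection and the robots equipped with a common chirality, they could in principle detect which side of the mirror they lie on and thereby break the symmetry. The argument must therefore establish that the view of a robot, and hence the output of $\mathcal{A}$, is invariant under the relabeling induced by $\phi$, and that this holds uniformly whether $\phi$ is a reflection (order $k=2$) or a rotation (order $k \in \{2,4\}$), with the nodes of $V'$ left fixed in every case.
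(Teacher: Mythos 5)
Your proposal is correct and follows essentially the same route as the paper: an adversarial fully synchronous schedule, an induction on rounds showing that the symmetry $\phi$ is preserved (the paper phrases the invariant as ``the configuration remains partitive on $V\setminus V'$,'' which is equivalent), and the final observation that a gathering node in $V\setminus V'$ lies in an orbit of size $k>1$, forcing robots onto two distinct nodes. The one situation the paper treats explicitly that your inductive step leaves implicit is when a multiplicity accumulated on a fixed node of $V'$ must be dispersed back into $V\setminus V'$: there the adversary has to exploit disorientation to assign the co-located, identical robots local frames related by $\phi$ so that they split evenly over the orbit of the destination, which is the same mechanism you already invoke for robots at distinct symmetric nodes.
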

	\begin{proof}
	  Let $H$ be the cyclic subgroup generated by $\phi$ and $k >1$ be the size of the corresponding orbits. If possible, let algorithm $\mathcal A$ solve the {\it gathering over meeting nodes} problem and ensure gathering over a \textit{meeting node} $m \in V \setminus V'$. This implies that starting from $C(0)$, all the robots reach a \textit{final configuration}. Consider the scheduler to be \textit{fully-synchronous}. Suppose, in the initial configuration, there exists a robot $r$ on a node $v \in V \setminus V'$ in the input grid graph. Since the scheduler is assumed to be \textit{fully-synchronous}, all the robots in the \textit{orbit} $H(v)$ are activated at the same time. As each robot in $H(v)$ has identical views, $\mathcal A$ cannot distinguish the robots in $H(v)$ deterministically. There exist different execution paths of the algorithm $\mathcal A$, but the scheduler may choose a particular execution of $\mathcal A$, where the destinations of each robot in $H(v)$ are the same. Since there is no robot position on $V'$, the configuration symmetry cannot be deterministically broken by allowing the robots to move from $V'$. We will prove the theorem by using induction on the number of rounds.

  \begin{figure}[!b]
				\centering
			{	\includegraphics[width=0.72\columnwidth]{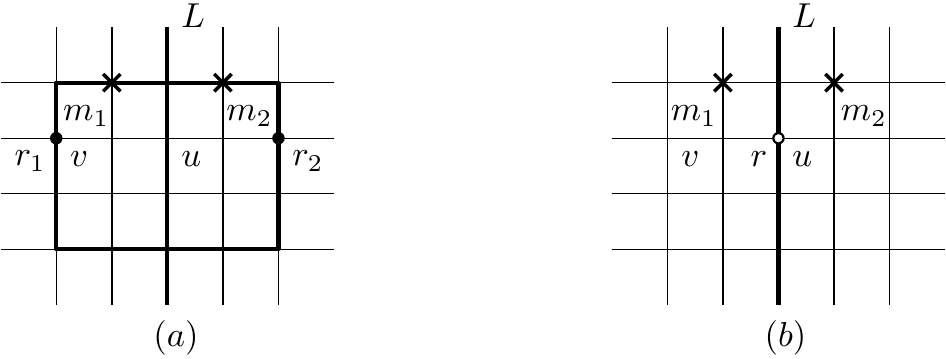}		}\vspace*{-2mm}
			\caption{(a) Each robot decides to move towards $u$ at round $t$. (b) Each robot on $H(v)$ moves towards $u$ on $L$ under the execution of $\mathcal A$ at round $t+1$. The configuration remains partitive after the movement. The circle on $u$ represents a multiplicity node.}
			\label{impoa}
		\end{figure}
	
\medskip
	  \noindent \textbf{Base Case:} By the assumption of the initial configuration, the configuration is partitive on the set $V \setminus V'$ at round $0$.

\medskip
	  \noindent \textbf{Inductive hypothesis:} Assume that the configuration is partitive on the set $V \setminus V'$ at round $t \geq1$.
	
\medskip
	  \noindent \textbf{Induction Step:} Let $r$ be an active robot at round $t$ that decides to move from node $v$ to node $u$. We need to prove that the configuration remains partitive on the set $V \setminus V'$ at round $t+1$. At round $t+1$, the following cases are to be considered.
\begin{enumerate}
    \item $v \in V\setminus V'$ and $u \in V'$. Note that at round $t$, the robots in $H(v)$ have identical views and they execute the same deterministic algorithm $\mathcal A$. As a result, there exists at least one execution of $ \mathcal A$ out of different execution paths of $\mathcal A$, where each robot in $H(v)$ moves towards the same node $u$. Each robot belonging to the other orbit $H(v')$, where $v' \neq v$, may move towards the same node $u$ by the execution of $\mathcal A$. Under this execution, the configuration remains partitive on the set $V \setminus V'$ at round $t+1$ (In Figure \ref{impoa}(a), $v$ is the node which is occupied by the robot $r_1$. Here, $V'=L$, i.e., $V'$ is the set of nodes belonging to the line of symmetry $L$. In Figure~\ref{impoa}(b), under the execution of $\mathcal A$, each robot moves towards $u$ on $L$).

    \item $v \in V'$ and $u \in V \setminus V'$. Note that, in the initial configuration, $R(0) \cap V'= \emptyset$. Therefore, there must exist some round $0<t' <t$ at which a robot $r'$ moves from a node $w \in V \setminus V'$ to the node $v \in V'$. There exist different execution paths of the algorithm $\mathcal A$, but the scheduler may choose a particular execution of $\mathcal A$, where the destinations of each robot in $H(w)$ are the same node $v$. As a consequence, the number of robots on $v$ at round $t'+1$ is $n=ak$, where $a$ denotes the number of orbits (there might be different robots moving from different orbits towards $v$). Since each robot on $V'$ lies on a multiplicity node $v$, they have identical views. As the \textit{gathering} must be ensured on a \textit{meeting node} belonging to the set $V \setminus V'$, there exists at least one execution of $\mathcal A$ in which $a$ robots from $v$ move towards $u'$ at round $t+1$, for each distinct nodes $u' \in H(u)$. Thus, the configuration remains partitive on the set $V\setminus V'$ at round $t+1$ (In Figures \ref{impob}(a) and \ref{impob}(b), under the execution of $\mathcal A$, the robots on $v$ move towards the nodes belonging to the orbit $H(u)$ and creates multiplicity on those nodes).

          \begin{figure}[!h]
          \vspace*{2mm}
				\centering
			{				\includegraphics[width=0.72\columnwidth]{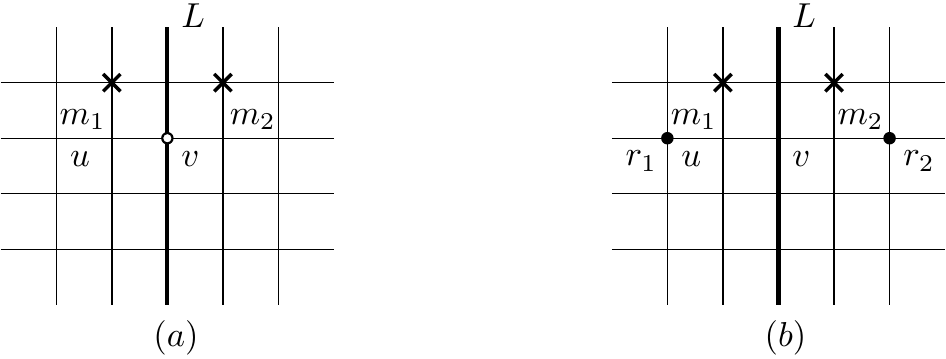}			}\vspace*{-2mm}
			\caption{(a) Each robot on $v$ decides to move towards distinct nodes of $H(u)$ at round $t$. (b) The robots on $v$ move towards distinct nodes of $H(u)$ under the execution of $\mathcal A$ at round $t+1$. The configuration remains partitive after the movement.}
			\label{impob}\vspace*{-2mm}
		\end{figure}

    \item $v \in V\setminus V'$ and $ u \in V\setminus V'$. There exists at least one execution of $\mathcal A$ in which the destinations of each robot $r'$ on the node $v'$ is some node $u'$, where $v' \in H(v)$ and $u' \in H(u)$. Since the configuration was partitive on the set $V \setminus V'$ at round $t$, the configuration remains partitive on the set $V \setminus V'$ at round $t+1$ (In Figures \ref{impoc}(a) and \ref{impoc}(b), the robots on the nodes $H(v)$ move towards the nodes belonging to $H(u)$).
		
		    \begin{figure}[!h]
\vspace*{-2mm}
				\centering
			{				\includegraphics[width=0.72\columnwidth]{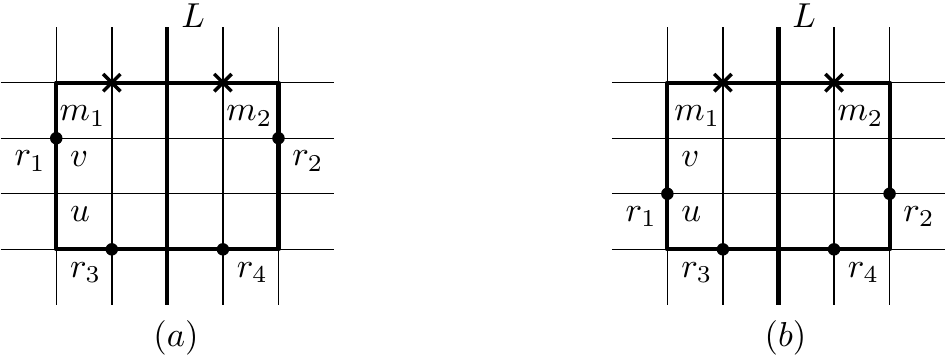}			}\vspace*{-2mm}
			\caption{(a) Each robot on $H(v)$ decides to move towards $H(u)$ on $V\setminus L$ at round $t$. (b) Each robot on $H(v)$ decides to move towards $H(u)$ on $V\setminus L$ at round $t+1$ under the execution of $\mathcal A$. The configuration remains partitive after the movement.}
			\label{impoc}\vspace*{-2mm}
			\end{figure}

		    \item $v \in V'$ and $ u \in V'$. Note that, since in the initial configuration, $R(0) \cap V'= \emptyset$, there must exist some round $0<t' <t$ at which a robot $r'$ moves from a node $w \in V \setminus V'$ to the node $v \in V'$. There exists at least one execution of $\mathcal A$, where the destinations of each robot in $H(w)$ are the same node $v$. At round $t+1$, it might be the case that each robot on $v$ moves towards $u$, under the execution of $\mathcal A$. Thus, the configuration remains partitive on the set $V\setminus V'$ at round $t+1$ (In Figures \ref{impod}(a) and \ref{impod}(b), the robots on $v$ move towards $u$).

	\begin{figure}[!h]
				\centering
			{	\includegraphics[width=0.72\columnwidth]{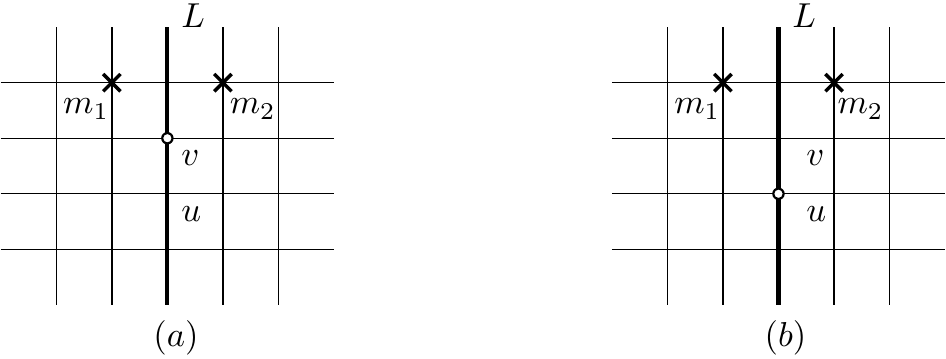}		}\vspace*{-2mm}
			\caption{(a) Each robot on $v$ decides to move towards $u$ at round $t$. (b) Each robot on $v$ moves towards $u$ at round $t+1$. The configuration remains partitive after the movement.}
			\label{impod}
			\end{figure}
		\end{enumerate}

		Starting from $C(t)$ and with the execution of $\mathcal A$, $C(t+1)$ remains partitive on the set $V\setminus V'$ at round $t+1$. Therefore, by the principle of mathematical induction, the configuration $C(t)$ remains partitive on the set $V\setminus V'$ at any round $t \geq 0$. Since the configuration remains partitive on the set $V\setminus V'$ at round $t+1$, no algorithm can ensure \textit{gathering} of the robots at a \textit{meeting node}. In fact, to ensure \textit{gathering}, there must exist a node $x \in V \setminus V'$ such that $|H(x)|=1$, but under the execution of the algorithm $\mathcal A$, the size of each orbit is $|H(x)|=k$ and $k \geq 2$, for all $x \in V \setminus V'$. This contradicts the assumption that all the robots reach a final configuration under the execution of the algorithm $\mathcal A$. Thus, \textit{gathering} cannot be ensured at a \textit{meeting node} belonging to $V \setminus V'$.
\end{proof}
		\noindent If $C(0)$ is partitive on the node set $V\setminus V'$, then from Theorem \ref{impo} it follows that there must exist at least one \textit{meeting node} $m \in V'$ where \textit{gathering} will be finalized. In this proof, we have considered the scheduler to be \textit{fully-synchronous}. Since the impossibility result holds for \textit{fully-synchronous} scheduler and the assumption of \textit{fully-synchronous} scheduler is stronger than that of \textit{asynchronous} scheduler, the impossibility result holds even for \textit{asynchronous} scheduler. Let $V'$ be the set of nodes on $L$, if $C(0) \in \mathcal I_{41}$. Otherwise, let $V'$ be the node $\lbrace c \rbrace$, if $C(0) \in \mathcal I_{42}$. Now, we have the following corollary:
	\begin{corollary} \label{i2}
		If $C(0)\in\mathcal{I}_4$, then the \textit{gathering over meeting nodes} problem is unsolvable.
	\end{corollary}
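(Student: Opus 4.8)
The plan is to obtain the corollary as a direct specialization of Theorem \ref{impo}: for each subclass of $\mathcal{I}_4$ I would exhibit a fixed-node set $V'$ and an automorphism $\phi$ meeting the theorem's hypotheses, and then use the defining property of $\mathcal{I}_4$ to argue that every meeting node lies in $V \setminus V'$, so that the gathering target ruled out by the theorem is in fact the only admissible one. Throughout, gathering must finalize on a \textit{meeting node}, so it suffices to show that no deterministic algorithm can gather on a \textit{meeting node}.

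First I would treat $\mathcal{I}_{41}$. Here $M$ and $R \cup M$ are symmetric with respect to the unique line $L$, and $L$ carries neither a robot nor a \textit{meeting node}. I take $V'$ to be the set of grid nodes lying on $L$ (this set is empty when $L$ passes through edges rather than through nodes) and let $\phi$ be the reflection about $L$. Since $R \cup M$ is symmetric with respect to $L$, we have $\phi \in Aut(C(0), l)$; and as $\phi$ has order $k=2$, its only fixed nodes are those on $L$, so $\phi(v')=v'$ for every $v' \in V'$ while $|H(u)|=2=k$ for every $u \in V \setminus V'$, i.e.\ $\phi$ is partitive on $V \setminus V'$. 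Moreover $V' \cap R(0) = \emptyset$ because no robot lies on $L$. Hence the hypotheses of Theorem \ref{impo} are satisfied, and no deterministic algorithm can gather the robots on a node of $V \setminus V'$. Because $\mathcal{I}_{41}$ has no \textit{meeting node} on $L$, every \textit{meeting node} belongs to $V \setminus V'$, so gathering at a \textit{meeting node} is impossible.

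The subclass $\mathcal{I}_{42}$ is handled in the same way, taking $\phi$ to be the rotation about the center $c$ (of order $k=4$ or $k=2$ according to the angle) and setting $V' = \{c\}$ when $c$ is a grid node and $V' = \emptyset$ otherwise. A nontrivial rotation fixes only its center, so $\phi(v')=v'$ on $V'$, $\phi$ is partitive on $V \setminus V'$, and $V' \cap R(0) = \emptyset$ since no robot occupies $c$; Theorem \ref{impo} then forbids gathering on any node of $V \setminus V'$, and since no \textit{meeting node} lies on $c$, all \textit{meeting nodes} are again in $V \setminus V'$. I do not expect a genuine obstacle, as the statement is essentially a restatement of the theorem; the only point requiring care is the case distinction according to whether the axis $L$ (resp.\ the center $c$) actually contains a grid node, which determines whether $V'$ is nonempty. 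In every case, however, the \textit{meeting nodes} remain confined to $V \setminus V'$, which is all the argument requires, so the \textit{gathering over meeting nodes} problem is unsolvable for $C(0) \in \mathcal{I}_4$.
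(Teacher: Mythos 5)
Your proposal is correct and follows essentially the same route as the paper: both instantiate Theorem \ref{impo} with $V'$ equal to the nodes on $L$ (for $\mathcal{I}_{41}$) or $\{c\}$ (for $\mathcal{I}_{42}$), observe that the reflection (resp.\ rotation) is partitive on $V \setminus V'$ with no robots on $V'$, and conclude that gathering would have to occur on $V'$, which contains no \emph{meeting node}. Your explicit treatment of the degenerate case where $L$ passes through edges or $c$ is not a node (so $V' = \emptyset$) is a small additional care that the paper instead handles by a standing assumption introduced just after the corollary.
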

	   \begin{proof}
		First, consider the case when $C(0)\in\mathcal{I}_{41}$. This implies that $C(0)$ is partitive on the node set $V\setminus L$. According to Theorem \ref{impo}, the \textit{gathering} must be ensured at a \textit{meeting node} on $L$. Since there does not exist any \textit{meeting node} on $L$, \textit{gathering} cannot be ensured at $L$. Therefore, the \textit{gathering over meeting nodes} problem is unsolvable. The proof holds similarly in the case when $C(0)\in\mathcal{I}_{42}$, where $C(0)$ is partitive on the node set $V\setminus \lbrace c \rbrace$.
				\end{proof}
		
		\begin{corollary}
		If an initial configuration is partitive on the set $V$, then it cannot be a final configuration.
		\end{corollary}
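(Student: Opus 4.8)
The plan is to argue by contradiction, using the label-preservation property of configuration automorphisms together with the structural description of a final configuration. Suppose, toward a contradiction, that a configuration $C$ is partitive on the full vertex set $V$ and is simultaneously a final configuration. Let $\phi \in Aut(C,l)$ be the automorphism witnessing partitivity, and let $H = \lbrace \phi^{0}, \phi^{1}, \ldots, \phi^{k-1} \rbrace$ be the cyclic subgroup it generates, of order $k \geq 2$. By the definition of partitivity on $V$, every orbit has full size, i.e. $|H(u)| = k$ for each $u \in V$; in particular $\phi$ has no fixed node.

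Next I would extract the single relevant node directly from the definition of a final configuration. Since $C$ is final, all $n \geq 2$ robots occupy one common \textit{meeting node} $m$, and $m$ is the unique node carrying any robot. Because at least two robots sit on $m$, the node $m$ is a robot multiplicity lying on a \textit{meeting node}, so $l(m) = 3$.

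The contradiction then comes from applying $\phi$ to $m$. Since $|H(m)| = k \geq 2$, we have $\phi(m) \neq m$, and because $\phi$ is an \textit{automorphism} of the configuration it preserves labels, so $l(\phi(m)) = l(m) = 3$. Hence $\phi(m)$ is a second node, distinct from $m$, that also carries a robot multiplicity. This contradicts the fact that in a final configuration $m$ is the unique occupied node. Therefore no configuration that is partitive on $V$ can be final; in particular this conclusion applies to an initial configuration that is partitive on $V$.

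The argument is essentially immediate once the two definitions are unwound, so there is no genuinely hard step; the one point requiring care is to invoke partitivity on all of $V$ (rather than on some proper subset $V \setminus V'$) so as to guarantee that $m$ itself is moved by $\phi$. If $m$ were permitted to be a fixed node, the label clash would not arise, which is precisely why the hypothesis insists on partitivity over the entire vertex set, leaving no admissible fixed node on which the robots could legitimately be gathered.
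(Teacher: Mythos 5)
Your argument is internally sound, but it proves only a static reading of the corollary: a configuration cannot simultaneously admit an automorphism that is partitive on all of $V$ and satisfy the definition of a final configuration. For an \emph{initial} configuration this is vacuous, since the paper stipulates that all $n \geq 2$ robots initially occupy distinct nodes, so no initial configuration is final regardless of any symmetry hypothesis. The content the paper actually extracts from this corollary (and uses immediately afterwards to discard configurations whose line of symmetry passes through edges, or whose center of rotation is not a node) is dynamic: starting from an initial configuration that is partitive on $V$, no deterministic algorithm can ever reach a final configuration. The paper obtains this by applying Theorem \ref{impo} with $V' = \emptyset$: under an adversarial scheduler the configuration stays partitive on $V$ at every round, and gathering could only be finalized on a node fixed by $\phi$ --- but partitivity on all of $V$ means $\phi$ fixes no node (the line of symmetry contains no nodes, and the center of rotation is not a node), so no admissible meeting node exists.

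Your label-clash observation (that $l(m) = 3$ would have to recur at $\phi(m) \neq m$) is a clean way to phrase the last link of that chain, namely that a configuration which is still partitive on $V$ cannot be final. What is missing is the persistence argument: on its own your proof does not exclude that some algorithm breaks the symmetry at a later round and then gathers. To repair it, apply your argument to $C(t)$ for every $t$, after first invoking Theorem \ref{impo} (or reproducing its induction) to guarantee that $C(t)$ remains partitive on $V$ throughout any execution.
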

		\begin{proof}
		Assume to the contrary that the configuration $C(0)$ is partitive on the set $V$, and $C(t)$ can be a final configuration. This implies that starting from $C(0)$, there must exist a distributed deterministic algorithm $\mathcal A$ which ensures \textit{gathering} of the robots on a \textit{meeting node}. First, consider the case when the configuration is symmetric with respect to a single line of symmetry $L$. Since the initial configuration is partitive on the set $V$, $L$ must be a line passing through the edges of the input grid graph. As the \textit{gathering} must be ensured on a \textit{meeting node} belonging to $L$, the configuration cannot be a final configuration.

Otherwise, if the configuration is symmetric with respect to rotational symmetry, the center of rotation must be a center of an edge or the center of a unit square. As the \textit{gathering} must be ensured on a \textit{meeting node} belonging to $c$, the configuration cannot be a final configuration.
		\end{proof}
In the rest of the paper, we assume that if a configuration admits a unique line of symmetry $L$, then $L$ passes through the nodes of the graph. Otherwise, if a configuration admits a rotational symmetry, then the center of rotation is a node. With this assumption, let $\mathcal U$ denote the set of all initial configurations which are ungatherable according to Corollary \ref{i2}. In other words, $\mathcal U$ represents the following collection of the configurations.
	\begin{itemize}
		\item admitting a unique line of symmetry $L$ and no \textit{meeting nodes} or robot positions on $L$.
		\item admitting rotational symmetry with no \textit{meeting node} or robot on $c$.
	\end{itemize}

		\section{Algorithm} \label{s4}

	This section describes our main algorithm $Gathering()$. The algorithm ensures \textit{gathering} over a \textit{meeting node} for all the initial configurations belonging to the set $\mathcal I \setminus \mathcal U$. The pseudo-code of the algorithm $Gathering()$ is given in Algorithm \ref{g1}. We will see later that if the \textit{meeting nodes} are asymmetric, then they can be ordered. Even if the \textit{meeting nodes} are symmetric with respect to $L$, and there exists \textit{meeting nodes} on $L$, then the \textit{meeting nodes} on $L$ are orderable.
	
			\begin{algorithm}[h]\small
		\KwIn{$C(t)=(R(t)$, $M) \in \mathcal I \setminus \mathcal U$}
		\footnotesize
		\uIf{$C(t) \in \mathcal I_{11}$ }
		{
			Each robot moves towards the \textit{meeting node} $m_s$ having the highest order with respect to $\mathcal O$ \;
		}
		\uElseIf{$C(t) \in \mathcal I_{12}$}
		{
		Each robot moves towards the \textit{meeting node} $m_z$ on $L$ having the highest order with respect to $\mathcal O'$ \;
		}
		\uElseIf{$C(t) \in \mathcal I_{13}$}
		{Each robot moves towards the \textit{meeting node} on $c$ \;
		}
		\uElseIf{$C(t) \in \mathcal I_{2}$}
		{
		$GatheringAsym()$\;
		}
		\ElseIf{$C(t) \in \mathcal I_{3}$}
		{
		$SymmetryBreaking()$ \;
		$GatheringAsym()$\;
		}
		
		\caption{Gathering()}
		\label{g1}
		\end{algorithm}

  \medskip
 First, consider the case when the \textit{meeting nodes} are asymmetric. Note that in this case, there exists a unique lexicographic largest string $\alpha_i$ (In the Figure \ref{orderingalgo}, $D$ is the unique \textit{leading corner} and $\alpha_{DA}$ is the unique largest lexicographic string). Consider an ordering $\mathcal O$ of the \textit{meeting nodes}, defined with respect to the unique \textit{leading corner}. Formally, while defining the string $\alpha_i$, let $(m_1, m_2, \ldots, m_s)$ be the ordering of \textit{meeting nodes} that appears in the \textit{string representation} of $\alpha_i$ (In Figure \ref{orderingalgo}(a), $(m_1,m_2,m_5,m_6,m_3,m_4)$ is the ordering $\mathcal O$). Similarly, if the \textit{meeting nodes} are symmetric with respect to a single line of symmetry $L$ and there exist \textit{meeting nodes} on $L$, the \textit{meeting nodes} on $L$ can be ordered according to their distances from the \textit{leading corner(s)}. Let $\mathcal O'=(m_1, m_2, \ldots m_z)$ be the ordering of the \textit{meeting nodes} on $L$, where $z$ denote the number of \textit{meeting nodes} on $L$ (In Figure \ref{orderingalgo}(b), $C$ and $D$ are the \textit{leading corners}. $(m_5, m_6)$ is the ordering $\mathcal O'$). Hence, we have the following observations.

\begin{figure}[!h]
\vspace*{-2mm}
			\centering
			{				\includegraphics[width=0.27\columnwidth]{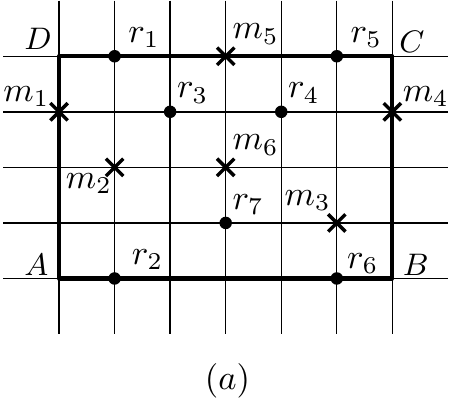}			}
			\hspace*{0.51cm}
			{				\includegraphics[width=0.27\columnwidth]{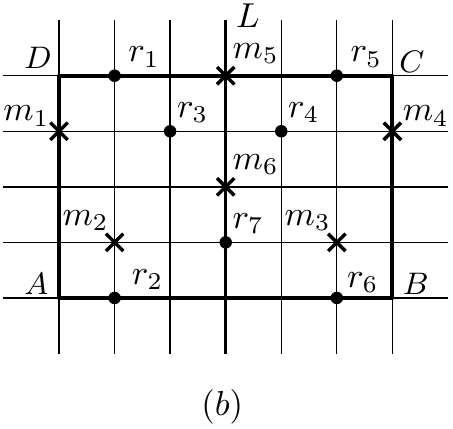}			}		\vspace*{-1mm}
			\caption{(a) The \textit{meeting nodes} are asymmetric. (b) The \textit{meeting nodes} are symmetric with respect to $L$, and there exists \textit{meeting nodes} $m_5$ and $m_6$ on $L$.}
			\label{orderingalgo}\vspace*{-4mm}
		\end{figure}

			\begin{observation} \label{o1}
		If the \textit{meeting nodes} are asymmetric, then they are orderable.
		\end{observation}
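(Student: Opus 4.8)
The plan is to read the order off the canonical string representation $\alpha_i$ built on $MER_F$ with the indicator $\lambda$, using the fact that asymmetry of $M$ forces a unique \textit{leading corner}. First I would recall the construction: with each corner of $MER_F$ we associate its string representation $\alpha_i$, obtained by scanning all nodes of $MER_F$ in the string direction and recording $\lambda(v)$ at each node, the two candidate strings at a corner being resolved exactly as for $s_i$ (the shorter-side direction if $MER_F$ is non-square, the lexicographically larger string if it is square). The link I need to establish is that two corners yield the same string representation if and only if some nontrivial automorphism preserving $\lambda$---a reflection or rotation fixing $M$ setwise---maps one corner onto the other.

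Next, since the \textit{meeting nodes} are asymmetric, the subgroup of grid automorphisms preserving $\lambda$ is trivial, so by the previous link no two corners can share a string representation. Hence among the (at most eight) strings associated to the corners of $MER_F$ there is a unique lexicographically largest one $\alpha_i$, and therefore a unique \textit{leading corner}. This is precisely the content already recorded when the \textit{leading corner} was defined, and it is exactly what the asymmetry hypothesis supplies.

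Finally I would fix this unique \textit{leading corner} and its induced scan of $MER_F$, and read off the nodes with $\lambda(v)=1$ in the order in which the scan encounters them, producing a sequence $(m_1,\dots,m_s)$ that exhausts $M$, i.e. the total order $\mathcal O$. Because the \textit{leading corner}, the string direction, and hence the whole scan are determined intrinsically by $M$ and not by any robot's local coordinate system or chirality, every robot reconstructs the same $\alpha_i$, the same \textit{leading corner}, and thus the same sequence; so $\mathcal O$ is observer-independent and $M$ is orderable. The step I expect to be the real obstacle is the correspondence between equal string representations and symmetries of $M$, together with its degenerate cases: when $MER_F$ is square each corner contributes two strings, and a diagonal symmetry through a corner makes that corner's two strings coincide, so one must verify that the lexicographic tie-break still delivers a unique largest string once $M$ is assumed asymmetric. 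Once these corner cases are checked, uniqueness of the \textit{leading corner}---and hence of the ordering---follows.
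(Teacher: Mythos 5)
Your proposal is correct and follows essentially the same route as the paper: a unique lexicographically largest string $\alpha_i$ gives a unique \textit{leading corner}, and the ordering $\mathcal O$ is read off as the order in which the scan from that corner encounters the \textit{meeting nodes}. The paper treats the key step almost definitionally (it declares $M$ symmetric precisely when no unique largest $\alpha_i$ exists), so your extra justification via the triviality of the $\lambda$-preserving automorphism group and the square/diagonal tie-break cases is a harmless elaboration of the same argument.
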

	\begin{observation} \label{o2}
	If the \textit{meeting nodes} are symmetric with respect to a unique line of symmetry $L$, and there exists at least one \textit{meeting node} on $L$, then the \textit{meeting nodes} on $L$ are orderable.
	\end{observation}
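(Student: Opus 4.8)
The plan is to reduce the orderability of the meeting nodes on $L$ to the problem of \emph{canonically orienting} the line $L$: once a coordinate-free direction along $L$ is fixed, the $z$ meeting nodes on $L$ inherit a well-defined linear order by their position along that direction. First I would record that, since $M$ is symmetric with respect to a unique line of symmetry $L$, the reflection $\phi$ about $L$ is an automorphism of $(V,\lambda)$, and $\phi$ fixes every node of $L$ pointwise. Hence each meeting node on $L$ is a fixed point of $\phi$, and the symmetry never identifies two distinct meeting nodes of $L$ with one another. The only genuine obstruction to ordering them is therefore the absence of an a priori preferred direction along $L$, not any ambiguity among the nodes themselves.

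The key step is to exhibit such a preferred direction. I would show that no automorphism of $(V,\lambda)$ reverses the orientation of $L$. Any orientation-reversing isometry that maps $L$ to itself is either a reflection about a line $L^{\perp}$ perpendicular to $L$, or a $180^{\circ}$ rotation $\rho$ about a node of $L$; a glide reflection is impossible for a finite configuration. In the first case $L^{\perp}$ would be a second line of symmetry, and in the second case $\phi\circ\rho$ equals the reflection about the perpendicular line through the rotation center, again producing a second line of symmetry. Both contradict the uniqueness of $L$. Consequently the two ends of $L$ are distinguishable by the configuration: were the configuration to look identical when scanned from either end, the bijection matching the two scans would be exactly such an orientation-reversing automorphism. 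Comparing the two strings $\alpha_i$ associated with the two ends of $L$ thus produces a strict lexicographic winner, which I take as the canonical direction; operationally this is the information already encoded by the leading corner(s) of $MER_F$.

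I would then complete the argument by listing the meeting nodes on $L$ in increasing position along this canonical direction. Because the strings $\alpha_i$, the leading corner(s), and hence the chosen orientation are all defined intrinsically from $\lambda$ and the grid structure, every robot computes the same order regardless of its local coordinate system or chirality; this is precisely what ``orderable'' demands, and it parallels the asymmetric case treated in Observation~\ref{o1}, the only difference being that here I expect a symmetric pair of leading corners interchanged by $\phi$ rather than a unique one.

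The step I expect to be the main obstacle is making the orientation argument uniform across all sub-cases, and in particular the diagonal one. When $L$ is a diagonal of the (necessarily square) $MER_F$ and the lexicographically largest string is attained at the two off-diagonal corners $B=\phi(D)$, every node of $L$ is equidistant from both leading corners, so a naive ``distance from the leading corner'' rule cannot break the symmetry of the two ends of $L$. In that sub-case I would instead orient $L$ by comparing the strings rooted at the two corners of $MER_F$ that actually lie on $L$—these are unequal, once more by uniqueness of $L$—and then verify that the resulting order on the fixed nodes of $L$ is the same whether read from either leading corner. Checking that a single corner-independent order emerges in each configuration (vertical or horizontal $L$, diagonal $L$ with a leading corner on $L$, and diagonal $L$ with both leading corners off $L$) is the delicate bookkeeping that the proof must handle.
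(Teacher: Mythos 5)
Your proposal is correct, but it establishes the observation by a genuinely different mechanism than the paper. The paper's justification is essentially one line: the set of leading corner(s) of $MER_F$ is invariant under the reflection $\phi$ about $L$, every meeting node on $L$ is a fixed point of $\phi$, and so the meeting nodes on $L$ are simply listed ``according to their distances from the leading corner(s)'' --- no orientation of $L$ is ever chosen. You instead prove the stronger structural fact that uniqueness of $L$ forbids any automorphism of $(V,\lambda)$ from reversing the direction of $L$ (a perpendicular reflection, or a half-turn about a point of $L$ which composed with $\phi$ yields a perpendicular axis, would each contradict uniqueness), and you use this to orient $L$ canonically and read off the meeting nodes in that direction. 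Your route costs an extra lemma, but it buys robustness exactly where the paper's literal rule is weakest: when $L$ is a diagonal of the square $MER_F$ and the lexicographically largest string is attained at the two off-diagonal corners, every node of $L$ is at the \emph{same} Manhattan distance from both leading corners, so ``order by distance from the leading corner(s)'' degenerates, precisely as you point out. The paper's rule survives only under the charitable reading that the intended order is the order of appearance in the string $\alpha_i$ (as in the asymmetric case of Observation \ref{o1}), which is $\phi$-invariant and injective on $L$ even in the diagonal case; your explicit tie-break via the strings rooted at the two on-diagonal corners (which must differ, again by uniqueness of $L$) accomplishes the same thing and is the more carefully argued of the two. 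In short: same conclusion, different key lemma, and your version closes a sub-case the paper glosses over.
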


	\subsection{Gathering()}

	 In this subsection, a deterministic distributed algorithm $Gathering()$ has been proposed to solve the {\it gathering over meeting nodes problem} in an infinite grid graph. Our proposed algorithm solves the \textit{gathering} problem for all the configurations belonging to the set $\mathcal I \setminus \mathcal U$ and comprising at least two robots. The algorithm $Gathering()$ works according to the class of configurations that each robot perceives in its local configuration view. The strategy of the algorithm is to find a single \textit{meeting node} such that all the robots can agree on it and gather at that node within a finite amount of time. If $ \vert M \vert = 1$, then all the robots move towards the unique \textit{meeting node} and finalize the \textit{gathering}. So, we assume that $\vert M \vert \geq 2$. The unique \textit{meeting node} which is considered for \textit{gathering} is defined as the \textit{target meeting node}.
	
\subsubsection{$\mathcal I_1$}
	Depending on whether the initial configuration $C(0)$ belongs to $\mathcal I_{11}$, $\mathcal I_{12}$ and $\mathcal I_{13}$, the following cases are considered:
	\begin{enumerate}
\itemsep=0.8pt
	    \item $C(0) \in \mathcal I_{11}$. According to Observation \ref{o1}, since the \textit{meeting nodes} are asymmetric, they are orderable. Consider the ordering $\mathcal O$ of the \textit{meeting nodes}, defined with respect to the unique \textit{leading corner}. The \textit{meeting node} $m_s$ having the highest order with respect to $\mathcal O$ is selected as the \textit{target meeting node}. All the robots move towards $m_s$ and finalize the \textit{gathering} at it (In Figure \ref{orderingalgo}(a), $D$ is the \textit{leading corner} and $(m_1,m_2,m_5,m_6,m_3,m_4)$ is the ordering $\mathcal O$. $m_4$ is the \textit{meeting node} which has the highest order in $\mathcal O$. $m_4$ is selected as the \textit{target meeting node}).

\begin{figure}[!b]
				\centering
		{	\includegraphics[width=0.25\columnwidth]{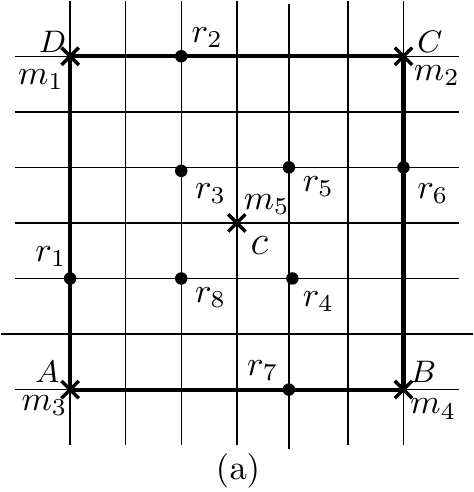}	}
		\hspace*{0.9cm}
		{	\includegraphics[width=0.23\columnwidth]{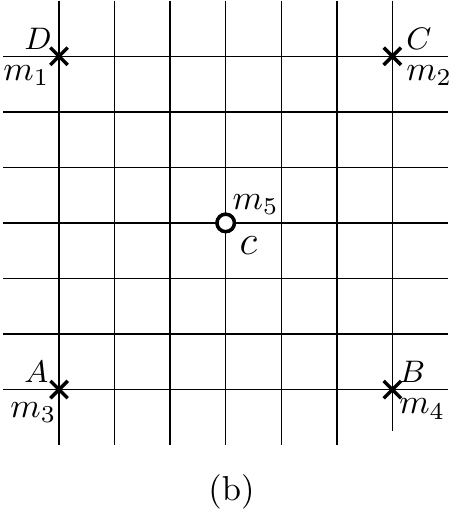}		}
			\caption{(a) $\mathcal{I}_{13}$- configuration with {\it meeting node} on $c$. $m_5$ is selected as the \textit{target meeting node}. (b) The {\it gathering} is finalized by moving the robots towards $m_5$.}
		\label{order}
	\end{figure}

	    \item $C(0) \in \mathcal I_{12}$. There exists at least one \textit{meeting node} on $L$. According to Observation \ref{o2}, the \textit{meeting nodes} on $L$ are orderable. Since the {\it meeting} nodes are fixed, the ordering remains invariant during the movement of the robots. Consider the ordering $\mathcal O'=(m_1, m_2, \ldots m_z)$ of the \textit{meeting nodes} on $L$. The \textit{meeting node} $m_z$ on $L$ having the highest order with respect to $\mathcal O'$ is selected as the \textit{target meeting node}. Each robot moves towards the \textit{meeting node} $m_z$, and the \textit{gathering} is finalized at $m_z$ (In Figure \ref{orderingalgo}(b), $C$ and $D$ are the \textit{leading corners} and $(m_5,m_6)$ is the ordering $\mathcal O'$. $m_6$ is the \textit{meeting node} on $L$ which has the highest order among all the \textit{meeting nodes} on $L$ according to $\mathcal O'$. $m_6$ is selected as the \textit{target meeting node}).
	    	
	    \item $C(0) \in \mathcal I_{13}$. There exists a \textit{meeting node} (say $m$) on $c$. Each robot moves towards $m$, and finalizes the \textit{gathering} at $m$ (In Figure \ref{order}(a), $m_5$ is selected as the \textit{target meeting node}. In Figure \ref{order}(b), each robot moves towards $m_5$).
	\end{enumerate}

	\begin{lemma} \label{lemma1}
	If $C(0)\in$ $\mathcal I_{1}$, then the \textit{target meeting node} remains invariant during the execution of the algorithm $Gathering()$ at any time $t>0$.
	\end{lemma}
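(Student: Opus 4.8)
The plan is to show that the \textit{target meeting node} is a deterministic function of the set $M$ alone, and then to exploit the fact that $M$ is fixed throughout the execution. Since the \textit{meeting nodes} are pre-defined and never move, the set $M$, its minimum enclosing rectangle $MER_F$, the labelling $\lambda$, every string $\alpha_i$, the \textit{leading corner}(s), the line of symmetry $L$ (when $M$ is line-symmetric), the centre $c$ (when $M$ is rotation-symmetric), and the orderings $\mathcal{O}$ and $\mathcal{O}'$ derived from them are all invariant in time. In each of the three subcases the \textit{target meeting node} is read off directly from these invariant data: it is the highest-order node of $\mathcal{O}$ in $\mathcal{I}_{11}$, the highest-order node of $\mathcal{O}'$ lying on $L$ in $\mathcal{I}_{12}$, and the unique \textit{meeting node} on $c$ in $\mathcal{I}_{13}$. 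None of these quantities involves the robot positions, so the target cannot change with the motion of the robots, provided an active robot always re-classifies the configuration into the same subclass.

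First I would record that a robot moving onto or off a \textit{meeting node} does not alter $M$: such a move only changes the value of $l$ on a node that already belongs to $M$ (say from $1$ to $2$), so $M$ and $\lambda$ are untouched. Consequently, by Observations~\ref{o1} and~\ref{o2}, the relevant ordering is recomputed identically at every activation, and the selected target is the same node each time.

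The step I expect to be the main obstacle is proving that the configuration never leaves its original subclass, so that the correct branch of $Gathering()$ is always taken and the side conditions on $R\cup M$ in the definitions of $\mathcal{I}_{12}$ and $\mathcal{I}_{13}$ continue to hold. What must be ruled out is that robot movements create a symmetry of $C(t)$ incompatible with the subclass, such as a spurious second axis or a rotation in the $\mathcal{I}_{12}$ case. The key observation resolving this is that every automorphism $\phi$ of $C(t)$ preserves $l$ and therefore fixes the set $\{v : l(v)\in\{1,2,3\}\}=M$ setwise; hence $\phi$ restricts to a symmetry of $M$, and the geometric symmetries of $C(t)$ are always contained in those of $M$. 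Since $M$ has a unique line of symmetry $L$ (and thus no rotational symmetry) in $\mathcal{I}_{12}$, every symmetry of $C(t)$ is either the identity or the reflection across $L$, so $R\cup M$ stays asymmetric or $L$-symmetric; likewise in $\mathcal{I}_{13}$ every symmetry of $C(t)$ is a rotation about $c$. Because the distinguishing property of $\mathcal{I}_1$ — that $M$ is asymmetric, or symmetric with a \textit{meeting node} on its axis/centre — depends on $M$ alone, it is invariant, and the above containment shows the side conditions are automatically met. Combining the invariance of the orderings with the invariance of the subclass then yields the claim by a straightforward induction on the activations.
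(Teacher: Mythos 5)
Your proof is correct and follows essentially the same route as the paper's: in all three subclasses the \textit{target meeting node} is determined solely by the fixed set $M$ (via $\mathcal O$, $\mathcal O'$, $L$, or $c$), so it cannot change as the robots move. Your additional observation that every automorphism of $C(t)$ fixes $M$ setwise---so the configuration can never acquire a symmetry incompatible with its subclass---merely makes explicit a point the paper only gestures at (``$L$ remains uniquely identifiable, as it is also the line of symmetry for $M$''), rather than constituting a different approach.
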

	\begin{proof}
	Depending on whether the initial configuration $C(0)$ belongs to $\mathcal I_{11}$, $\mathcal I_{12}$ and $\mathcal I_{13}$, the following cases are considered.
	\begin{case}
	   $C(0)\in$ $\mathcal I_{11}$. Each robot agrees on the ordering $\mathcal O$ of the \textit{meeting nodes}. Since the ordering remains invariant during the movement of robots, the \textit{meeting node} having the highest order in $\mathcal O$ also remains invariant. As a result, the \textit{target meeting node} remains invariant.
	\end{case}
	\begin{case}
	   $C(0)\in$ $\mathcal I_{12}$. The \textit{target meeting node} $m_z$ is selected as the \textit{meeting node} on $L$ having the highest order with respect to $\mathcal O'$. Since the ordering depends on the positions of \textit{meeting nodes}, it remains invariant while the robots move towards $m_z$. Consider the case when the configuration is symmetric. Even if the configuration becomes asymmetric because of a possible pending move, $L$ remains uniquely identifiable, as it is also the line of symmetry for $M$.
	\end{case}

	\begin{case}
	   $C(0)\in$ $\mathcal I_{13}$. The \textit{meeting node} $m$ on $c$ is selected as the \textit{target meeting node}. Since $c$ is also the center of the rotational symmetry of the \textit{meeting nodes}, it remains invariant while the robot moves towards it.
	   \end{case}
	   Hence, the \textit{target meeting nodes} remains invariant during the execution of the algorithm at any time $t>0$.
		\end{proof}

	\subsubsection{$\mathcal I_2$}

		Assume that the initial configuration $C(0) \in \mathcal I_2$. In this case, the \textit{meeting nodes} are symmetric, but the configuration is asymmetric. There does not exist any \textit{meeting node} on $L \cup \lbrace c \rbrace$. Here, each robot executes $GatheringAsym()$. The overview of the procedure $GatheringAsym()$ is discussed as follows.
	
\medskip
	\noindent \textbf{Overview of the procedure:}
	 The procedure comprises the following phases: \textit{Guard Selection and Placement, Creating Multiplicity on Target Meeting Node} and \textit{Finalization of Gathering}. Since the robots are oblivious, each robot determines its current phase by analyzing the current configuration in its local configuration view. Note that, since the \textit{meeting nodes} are symmetric, any ordering of the \textit{meeting nodes} in the initial configuration depends on the robot positions and may change during the movement of the robots. In order to fix a particular ordering of the \textit{meeting nodes}, a robot denoted as a \textit{guard} is selected and placed in the \textit{Guard Selection and Placement} phase. Each \textit{non-guard} robot moves towards the \textit{target meeting node} in the \textit{Creating Multiplicity on Target Meeting Node} phase. The \textit{guard} is selected and placed in such a way that during the execution of the procedure $GatheringAsym()$, it remains uniquely identifiable by the other robots. The main strategy of the algorithm is to maintain the invariance of the \textit{target meeting node} in the \textit{Creating Multiplicity on Target Meeting Node} phase. Finally, in the \textit{Finalization of Gathering} phase, the \textit{guard} moves towards the \textit{target meeting node}. While the \textit{guard} moves, it moves in a shortest path.
	 	
\medskip
	\noindent\textbf{\emph{Guard Selection and Placement:}} In this phase, a single robot is selected as the \textit{guard}. The \textit{guard} is selected and placed in such a way that it remains uniquely identifiable by the other robots during the execution of the procedure $GatheringAsym()$. Let $MER_F$ denote the \textit{minimum enclosing rectangle} of all the \textit{meeting nodes}. First, assume that the \textit{meeting nodes} are symmetric with respect to a unique line of symmetry $L$, and there does not exist any \textit{meeting node} on $L$. Since the configuration is asymmetric, there always exists a unique \textit{key corner}. As a result, a unique robot with the maximum configuration view exists. Let $d_1$ denote the maximum distance between a \textit{meeting node} from $L$. Similarly, let $d_2$ denote the maximum distance between a robot position from $L$. Next, we consider the following cases.
		\begin{enumerate}
	 	    \item There exists at least one robot position outside the rectangle $MER_F$. This implies that there exists at least one robot position at a distance $d_2>d_1$ from $L$. If there are multiple robots at a distance $d_2$, consider the robot with the maximum configuration view. Let $r$ be the robot with the maximum configuration view. $r$ is selected as the \textit{guard}, and it moves towards an adjacent node away from $L$. This movement results in creating a unique robot which is at the maximum distance from $L$ (In Figures \ref{guardnew}(a) and \ref{guardnew}(b), $r_2$ and $r_7$ are the robots outside the $MER_F$ and at the farthest distance from $L$. $r_7$ is the robot with the maximum configuration view as $B$ is the \textit{key corner}. $r_7$ move towards an adjacent node).

\begin{figure}[!h]
			\centering
			{				\includegraphics[width=0.850\columnwidth]{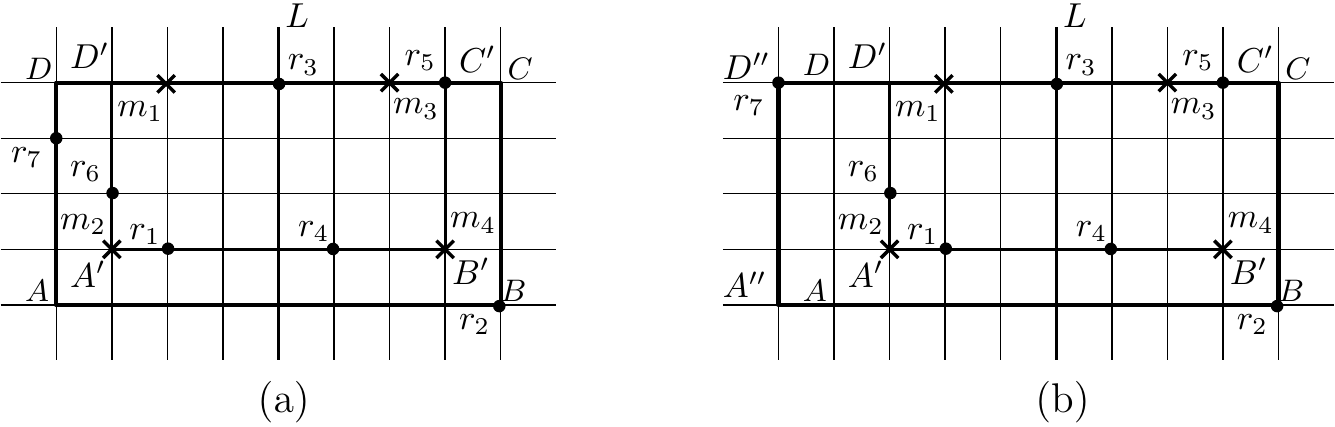}			}\vspace*{-1mm}
					\caption{ (a) $A'B'C'D'$ is the minimum enclosing rectangle $MER_F$ of all \textit{meeting nodes}. $ABCD$ is the $MER$. (b) $r_7$ is selected as a \textit{guard} and moves towards an adjacent node away from $L$. Finally, it moves towards its closest corner. The transformed $MER$ is $A''B C D''$.}			\label{guardnew}\vspace*{-1mm}
		\end{figure}

	 	    \item Each robot position is either inside or on the rectangle $MER_F$. This implies that any robot can be at the maximum distance $d_2$ from $L$ and $d_2 \leq d_1$. Consider the robot farthest from $L$. If there are multiple such robots, consider the robot $r$ with the maximum configuration view. $r$ is selected as the \textit{guard} and it moves towards an adjacent node away from $L$. $r$ continues its movement and the moment $r$ reaches a node which is outside the rectangle $MER_F$, $d_2$ becomes greater than $d_1$. The rest of the procedure follows similarly from the previous case. \end{enumerate}
	 	Once the \textit{guard} becomes the unique farthest robot from $L$, it moves towards the closest corner of $MER$ in the direction parallel to $L$. If the \textit{guard} is closest to two corners of $MER$, then it moves towards an arbitrary corner (In Figure \ref{guardnew}(b), $r_7$ moves towards $D''$). The procedure follows similarly when the \textit{meeting node} admits rotational symmetry, and there does not exist any \textit{meeting node} on $c$. In that case, $d_1$ and $d_2$ are defined as the distances measured from $c$. The pseudo-code corresponding to this phase is given in Algorithm \ref{algo2}.
	 		
\begin{algorithm}[h]\small
			\KwIn{$C(t)=(R(t)$, $M)$}
			\footnotesize
			\uIf{there exists at least one robot position outside $MER_F$}
			{
			\uIf{there exists exactly one robot position $r$ outside $MER_F$}
			{
			$r$ is selected as the \textit{guard} \;
			}
			\Else{The robot $r$ farthest from $L \cup \lbrace c \rbrace$ and with the maximum configuration view is selected as the \textit{guard} \;
			}
			
			$r$ moves towards an adjacent node away from $L \cup \lbrace c \rbrace$ and finally towards its closest corner\;
			}
			\ElseIf{each robot is inside or on $MER_F$}
			{
			\uIf{there exists a unique robot $r$ farthest from $L \cup \lbrace c \rbrace$ }
			{ $r$ is selected as the \textit{guard}\;
			}
			\Else{ The robot $r$ farthest from $L \cup \lbrace c \rbrace$ and with the maximum configuration view is selected as the \textit{guard}  \;
			}
			$r$ moves towards an adjacent node away from $L \cup \lbrace c \rbrace$ and continues its movement unless it is outside $MER_F$\;
			}
				\caption{$GuardSelection()$}
			\label{algo2}
		\end{algorithm}
	
 	\begin{lemma} \label{guardproof}
	 	During the execution of the procedure $GuardSelection()$, the \textit{guard} remains uniquely identifiable by the robots.
	 	\end{lemma}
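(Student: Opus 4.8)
The plan is to prove the lemma by induction on the sequence of configurations produced during the execution of $GuardSelection()$, exhibiting at each step an identifying predicate that singles out the guard. Throughout I use that $C(0)\in\mathcal I_2$, so the \textit{meeting nodes} are symmetric (with respect to a unique line $L$, or to a center $c$) while $R\cup M$ is asymmetric, and no \textit{meeting node} lies on $L\cup\{c\}$. Asymmetry of $R\cup M$ yields a unique \textit{key corner}, hence a unique robot of maximum \textit{configuration view}; this is what makes the selection step in Algorithm~\ref{algo2} unambiguous whenever a tie in distance occurs. The identifying predicate I would maintain is: \emph{the guard is the unique robot at maximum distance from $L\cup\{c\}$}, with the \textit{configuration view} used as a tie-breaker only at the selection instant.

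First I would treat the selection step (base case). If some robot lies strictly outside $MER_F$, it is at distance $d_2>d_1$ from $L\cup\{c\}$; if it is the only such robot it is trivially identifiable, and otherwise the farthest one with maximum \textit{configuration view} is picked, which is unique by the asymmetry of $R\cup M$. If every robot lies in or on $MER_F$, the farthest robot (again broken by \textit{configuration view}) is chosen. In all sub-cases the selection names exactly one physical robot $r$, and every robot, reading the same snapshot, computes the same $r$.

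For the induction step I would argue that each move prescribed by Algorithm~\ref{algo2} preserves the predicate and that only the guard ever moves. Since $GuardSelection()$ instructs a non-guard to stay put, the inductive hypothesis (the guard is uniquely identifiable in the current configuration) guarantees that every activated non-guard correctly recognizes it is not the guard and performs a null move; hence the only robot that changes position is the guard, and because no other robot moves between the guard's \textit{Look} and \textit{Move}, its step is executed exactly as computed, so the asynchronous scheduler introduces no ambiguity here. The guard's first move is directed away from $L\cup\{c\}$, which strictly increases its distance to $d_2+1$ while every other robot remains at distance at most $d_2$; thus after one step the guard is the strict, unique maximizer of distance from $L\cup\{c\}$, and the \textit{configuration view} tie-breaker is no longer needed. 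In the line-symmetry case the subsequent moves toward the closest corner of $MER$ are parallel to $L$, so they leave the guard's distance from $L$ unchanged and it remains the unique farthest robot; the rotational case is analogous with distances measured from $c$, the guard staying strictly farther from $c$ than every \textit{meeting node} and every other robot, and hence being the unique robot outside $MER_F$ as well.

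The step I expect to be the main obstacle is the handoff at the very first move when several robots tie for the farthest distance: one must check that the \textit{configuration view} tie-break picks the same robot from every local snapshot and that this identity is stable until the guard's step makes the distance-maximizer unique. This reduces to the uniqueness of the \textit{key corner} for asymmetric $R\cup M$, after which the distance-based predicate takes over and is manifestly preserved by the remaining moves. I would also verify that moving the guard away from $L\cup\{c\}$ cannot inadvertently create a symmetry of $R\cup M$, which again follows from the guard being the unique farthest robot.
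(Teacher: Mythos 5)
Your proof is correct and follows essentially the same route as the paper's: select the guard via the unique \textit{key corner} and maximum \textit{configuration view} when distances tie, then observe that after its first step away from $L\cup\{c\}$ it becomes the strict unique distance-maximizer, a property preserved by the subsequent moves toward the corner. Your inductive packaging and the explicit check that non-guards perform null moves under asynchrony are slightly more careful than the paper's two-case sketch, but the underlying argument is the same.
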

	
 	\begin{proof}
	 	First, assume that the \textit{meeting nodes} are symmetric with respect to $L$ and there does not exist any \textit{meeting nodes} on $L$. The proof follows similarly when the configuration admits rotational symmetry, and there does not exist any \textit{meeting node} on $c$. The following cases are to be considered.
	 	\setcounter{case}{0}
	 	\begin{case} \label{case111}
	 	There exists at least one robot position outside the rectangle $MER_F$. Note that there may be multiple such robots. If there exists precisely one such robot $r$, then according to the procedure $GuardSelection()$, $r$ is selected as a \textit{guard}. Otherwise, if there are multiple such robots, the robot $r$ with the maximum configuration view is selected as the \textit{guard}. $r$ moves towards an adjacent node $v$ away from $L$. The moment it reaches $v$, it becomes the unique farthest robot from $L$ and at least at a distance $d_2+1$ from $L$. While the \textit{guard} moves towards the corner, it remains the unique farthest robot from $L$. As the \textit{guard} is selected as the unique farthest robot from $L$, it remains uniquely identifiable.
	 	\end{case}
	 	\begin{case}
	 	Each robot position is inside or on the rectangle $MER_F$. In this case, the robot position farthest from $L$ is selected as a \textit{guard}. Note that there may be multiple such robots. The procedure $GuardSelection()$ ensures that the \textit{guard} is selected as the robot $r$ with the maximum configuration view. The moment $r$ moves towards an adjacent node away from $L$, it becomes the unique farthest robot from $L$. $r$ continues its movement unless it becomes the unique robot that is outside the rectangle $MER_F$. The rest of the proof follows from Case \ref{case111}.
	 	\end{case}

\vspace*{-6mm}
	 	\end{proof}	
 According to Lemma \ref{guardproof}, the \textit{guard} is the unique robot which is farthest from $L \cup \lbrace c \rbrace$. As a result, the \textit{guard} will not have any symmetric image with respect to $L \cup \lbrace c \rbrace$, and the configuration remains asymmetric. Once the \textit{guard} is selected and placed, we consider the corner of $MER$, which is occupied by the \textit{guard}. Starting from that corner, we scan the entire $MER$ in the direction parallel to the \textit{string direction} and associate each node $v$ to $\lambda(v)$. As a result, we would get a binary string. Consider the ordering of the \textit{meeting nodes} according to their positions in the string representation. We define the particular ordering by $\mathcal O''$.
	
\begin{figure}[!h]
\vspace{2mm}
			\centering
			{				\includegraphics[width=0.94\columnwidth]{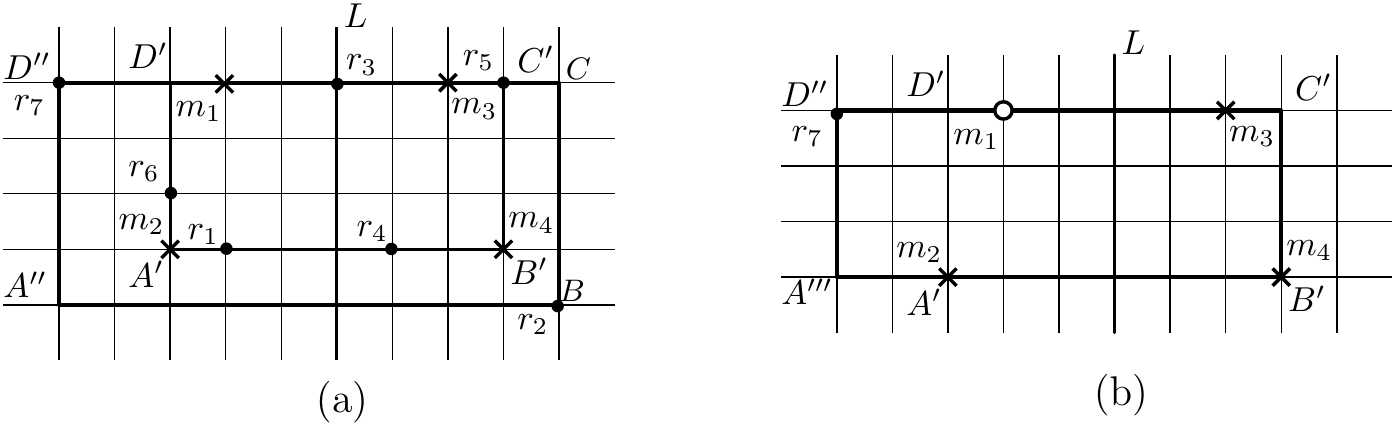}			}
					\caption{ (a) The configuration after the execution of $GuardSelection$. $m_1$ is selected as the \textit{target meeting node}. (b) Each \textit{non-guard} moves towards $m_1$ and creates a multiplicity on $m_1$. The $MER$ becomes $A'''B'C'D''$. The circle on $m_1$ represents a robot multiplicity on $m_1$.}
			\label{multinew}\vspace*{-2mm}
		\end{figure}

\medskip
	\noindent \textbf{\emph{Creating Multiplicity on Target Meeting Node:}} In this phase, each \textit{non-guards} moves towards the \textit{target meeting node} $m$. Since $R \cup M$ is asymmetric, there exists a unique ordering of the \textit{meeting nodes} with respect to the \textit{guard}. Note that the ordering remains invariant unless the \textit{guard} moves. Each robot agrees on the ordering $O''$ of the \textit{meeting nodes}. The \textit{target meeting node} $m$ is selected as the \textit{meeting node} that is closest to the \textit{guard}. If there are multiple such \textit{meeting nodes}, consider the \textit{meeting node} $m$ that has the minimum order in $\mathcal O''$ as the \textit{target meeting node}. Let $n_1$ denote the total number of distinct robot positions. If $n_1\geq 3$, then each \textit{non-guard} moves towards $m$ by executing $MakeMultiplicity()$ (In Figures \ref{multinew}(a) and \ref{multinew} (b), $m_1$ is the closest \textit{meeting node} from the \textit{guard}. Each \textit{non-guard} moves towards $m_1$). This would result in creating a robot multiplicity at $m$. Note that during this movement, the robots may create multiplicity on a \textit{meeting node} other than $m$. We have to ensure that, during this phase, $m$ remains invariant and uniquely identifiable. All the \textit{non-guard} robots move towards $m$ sequentially, i.e., the \textit{non-guard} robots that are closest to $m$ move first. The pseudo-code corresponding to this phase is given in Algorithm \ref{algo3}.
	
		\begin{algorithm}[h]\small
			\KwIn{$C(t)=(R(t)$, $M)$}
			\footnotesize
			\uIf{there exists a unique meeting node $m$ closest to the \textit{guard}}
			{
			Each robot selects $m$ as the \textit{target meeting node}\;
			}
			\Else{ Let $m$ be the closest {\it meeting node} that has the minimum order in $\mathcal O''$ \;
			Each robot selects $m$ as the {\it target meeting node}\;}
			Let $r$ be a closest \textit{non-guard} robot which is not on $m$\;
		$r$ moves towards $m$ in a shortest path \;
			\caption{$MakeMultiplicity()$}
			\label{algo3}
		\end{algorithm}
			\begin{lemma}
During the execution of $MakeMultiplicity()$, the \textit{target meeting node} remains invariant. \label{correckey}
\end{lemma}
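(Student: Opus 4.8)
The plan is to reduce the invariance of the target meeting node $m$ to the immobility and unique identifiability of the guard, since $m$ is a function only of the guard's position and of the fixed set $M$. Recall that $m$ is chosen as the meeting node closest to the guard, ties being broken by the smallest index in the ordering $\mathcal{O}''$. The guard-to-meeting-node distances are determined by two fixed point sets, and $\mathcal{O}''$ is read off from the values $\lambda(v)$, which mark only meeting nodes and are entirely insensitive to robot positions. I would begin by recording the two structural facts that make this work: in $MakeMultiplicity()$ only a non-guard robot is ever selected to move (the guard moves only in the \emph{Finalization of Gathering} phase), and the meeting nodes are fixed grid points. An immediate consequence, worth stating explicitly to settle the concern raised just before the lemma, is that any multiplicity a non-guard temporarily creates on a meeting node other than $m$ affects neither $\lambda$, nor the distances from the guard, nor $\mathcal{O}''$, and is therefore irrelevant to the choice of $m$.

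The next step is to show that the guard remains the unique reference throughout the phase, so that the frame from which $\mathcal{O}''$ is read is consistently determined. At the start of the phase the guard is, by Lemma~\ref{guardproof}, the unique robot farthest from $L\cup\{c\}$ and occupies a corner of $MER$. Each non-guard moves along a shortest path toward a meeting node, hence stays inside the bounding box of its current position and its destination; in the reflectional case this keeps its distance from $L$ bounded by the larger of its current distance and that of its destination, both strictly below the guard's distance (the latter because the guard lies outside $MER_F$), so no non-guard ever overtakes the guard. The rotational case is analogous along a shortest path chosen so as not to exceed the current distance from $c$. Since the guard neither moves nor is overtaken, it remains the unique farthest robot and continues to occupy the same corner of $MER$; in particular the corner from which the scan defining $\mathcal{O}''$ starts is identified in exactly the same way at every instant.

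Having fixed the scanning corner and noting that $\lambda$ and $M$ are invariant, I would conclude that $\mathcal{O}''$ is invariant, that the guard-to-meeting-node distances are invariant, and therefore that the closest meeting node to the guard — with the $\mathcal{O}''$ tie-break when needed — is the same at every instant of the phase, which is precisely the assertion of the lemma.

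The step I expect to be the main obstacle is making the invariance of the scanning frame fully rigorous. Although the guard pins down one corner of $MER$, the scan is taken \emph{parallel to the string direction}, and as the non-guard robots move inward the shape of $MER$ can change; a change in which side of $MER$ is the shorter one could in principle flip the string direction and thereby reorder two meeting nodes that are equidistant from the guard, corrupting the tie-break. To close this gap I would argue that the guard, being the unique extremal robot in the direction perpendicular to $L\cup\{c\}$, fixes not only its corner but also the orientation of the scan independently of the non-guard positions, so that the order in which the fixed meeting nodes are encountered cannot change while the guard stays put. This is the one place where the detailed geometry of the guard placement, rather than merely its uniqueness, must be invoked.
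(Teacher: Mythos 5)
Your argument follows essentially the same route as the paper's proof: since the guard does not move during $MakeMultiplicity()$ and the meeting nodes are fixed, both the guard-to-meeting-node distances and the ordering $\mathcal{O}''$ are unchanged, so the selected node $m$ is invariant. The only difference is that you explicitly flag and try to close the scanning-frame issue (the shape of $MER$, and hence the string direction, possibly changing as non-guards move inward), a subtlety the paper's proof passes over by simply asserting that $\mathcal{O}''$ remains invariant because the guard is stationary.
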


\begin{proof}
Since the configuration is asymmetric, there exists a unique ordering $\mathcal O''$ of the \textit{meeting nodes} with respect to the unique \textit{guard}. Note that during the execution of $MakeMultiplicity()$, the \textit{guard} does not move. The following cases are to be considered.
	\setcounter{case}{0}
		\begin{case} \label{case11}
			 The \textit{meeting nodes} are symmetric with respect to $L$. If there exists a unique \textit{meeting node} $m$ closest to the \textit{guard}, it is selected as the \textit{target meeting node}. Since the position of the \textit{meeting nodes} and the \textit{guard} remains fixed during the execution of $MakeMultiplicity()$, $m$ remains invariant. Otherwise, there are multiple closest \textit{meeting nodes} from the \textit{guard}. In that case, the \textit{target meeting node} $m$ is selected as the \textit{meeting node} closest to the \textit{guard} and that has the minimum order in $\mathcal O''$. Since the \textit{guard} does not move, the ordering $\mathcal O''$ remains invariant. Hence, $m$ remains invariant during the execution of $MakeMultiplicity()$.
			\end{case}
			\begin{case}
				The \textit{meeting nodes} are symmetric with respect to rotational symmetry. The \textit{target meeting node} is selected similarly, as in Case \ref{case11}. The rest of the proof follows similarly.
			\end{case}

\vspace*{-6mm}
\end{proof}

\begin{figure}[!b]
			\centering
			{
				\includegraphics[width=0.92\columnwidth]{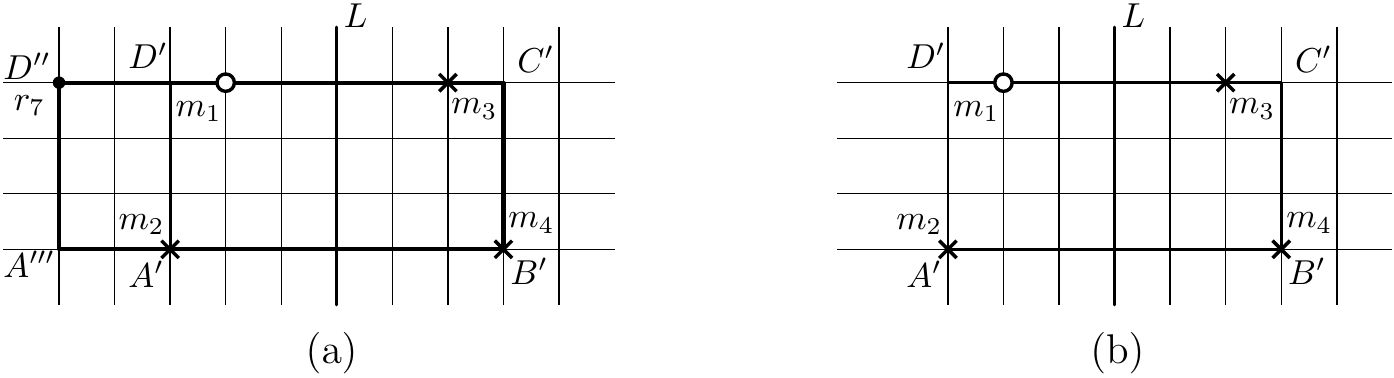}
			}\vspace*{-1mm}
		\caption{ (a) The configuration after the execution of $MakeMultiplicity()$. (b) $r_7$ moves towards $m_1$ and finalizes the \textit{gathering}. The $MER$ becomes $A'B'C'D'$.}
			\label{finalnew}
		\end{figure}
\noindent \textbf{\emph{Finalization of Gathering:}} In this phase, the \textit{guard} executes $GuardMovement()$. Note that the robots are endowed with \textit{local-weak multiplicity detection} capability. If $n_1 = 2$, then the \textit{guard} can identify that it does not lie on a robot multiplicity node and on a \textit{meeting node}. The \textit{guard} would start moving towards the other robot position in a shortest path. All the other robots on the \textit{target meeting node} $m$ would identify that they are already on a multiplicity node and they would not move. As a result, the robots on $m$ would remain on $m$. Since the \textit{target meeting node} $m$ is selected as one of the \textit{meeting nodes} closest to the \textit{guard}, while the \textit{guard} moves towards $m$ in a shortest path, it would not lie on any \textit{meeting node} other than $m$ in its movement path. Eventually, the \textit{guard} would finalize the {\it gathering} on $m$ (In Figures \ref{finalnew}(a) and \ref{finalnew}(b), $r_7$ observes that is not on a multiplicity node. It would move towards $m_1$). The pseudo-code corresponding to this phase is given in Algorithm \ref{algo4}.

	\begin{algorithm}[h]\small
			\KwIn{$C(t)=(R(t)$, $M)$}
			
			\footnotesize
					\If{$n_1=2$}
			{	Let $r$ be the robot that does not lie on a robot multiplicity node and on a {\it meeting} node\;
		
					$r$ moves towards the other robot position \;
			}
					\caption{GuardMovement()}
			\label{algo4}
			\end{algorithm}

\subsubsection{$\mathcal I_{3}$}
This subsection considers all initial configurations belonging to $\mathcal I_3$. Each robot executes $SymmetryBreaking()$. The algorithm description of the procedure $SymmetryBreaking()$ is as follows.

\begin{figure}[!h]
\vspace*{2mm}
		\centering
		{			\includegraphics[width=0.29\columnwidth]{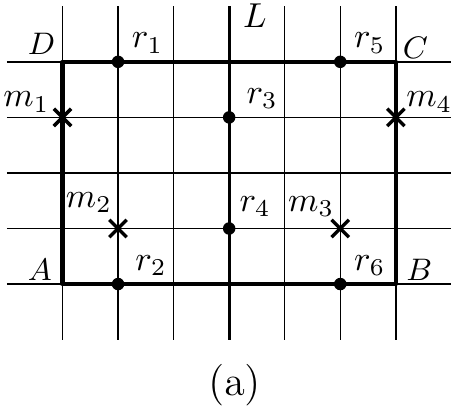}		}
		\hspace*{0.61cm}
		{			\includegraphics[width=0.29\columnwidth]{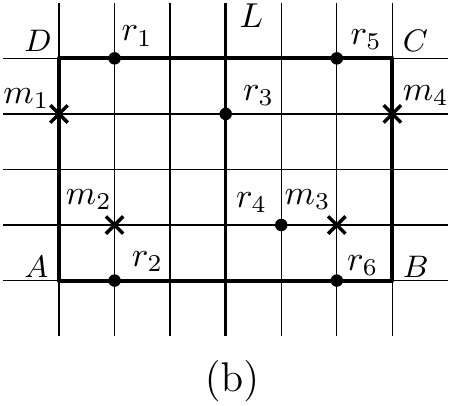}		}
		\caption{ (a) $\mathcal{I}_{31}$-configuration. (b) Transformation of $\mathcal{I}_{31}$-configuration into an asymmetric configuration by the movement of the robot $r_4$.}
		\label{alg1}
	\end{figure}

 \noindent \textbf{\emph{Symmetry Breaking}}: In this phase, all the symmetric configurations that can be transformed into asymmetric configurations are considered. A unique robot is identified that allows the transformation. We have the following cases.
 \begin{enumerate}
     \item $C(t)\in\mathcal{I}_{31}$. In this class of configurations, at least one robot exists on $L$. Let $r$ be the robot on $L$ having the maximum configuration view. $r$ moves towards an adjacent node that does not belong to $L$ (In Figures \ref{alg1}(a) and \ref{alg1}(b), $C$ and $D$ are the \textit{key corners}. As a result, $r_4$ is the unique robot on $L$ that has the maximum configuration view. $r_4$ moves towards an adjacent node away from $L$, and the configuration becomes asymmetric). Note that the configuration may not belong to $\mathcal I_{21}$ as it may contain multiplicities.
     		\begin{figure}[!h]
			\centering
			{				\includegraphics[width=0.250\columnwidth]{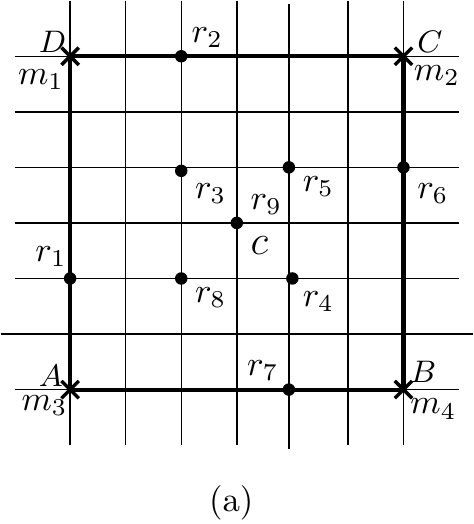}			}
			\hspace*{0.71cm}
			{				\includegraphics[width=0.250\columnwidth]{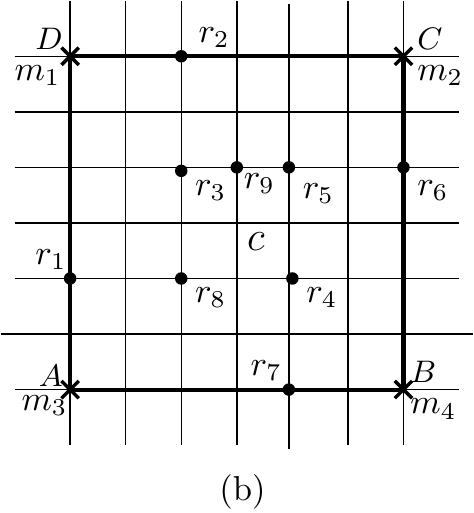}			}
			\caption{ (a) $\mathcal{I}_{32}$-configuration. (b) Transformation of $\mathcal{I}_{32}$-configuration into an asymmetric configuration by the movement of $r_9$ on $c$.}
			\label{alg11}\vspace*{-2mm}
		\end{figure}
     \item $C(t)\in\mathcal{I}_{32}$. In this class of configurations, there exists a robot (say $r$) on $c$. The robot $r$ moves towards an adjacent node (In Figures~\ref{alg11}(a) and~\ref{alg11}(b), the robot $r_9$ on $c$ moves towards an adjacent node and the configuration transforms into an asymmetric configuration). If the configuration admits rotational symmetry with multiple lines of symmetry and there is a robot $r$ at the center, $r$ moves towards an adjacent node. This movement creates a unique line of symmetry $L'$. However, the new position of $r$ might have a multiplicity. If that happens to be the robot with the maximum view on $L'$, moving robots from there might still result in a configuration with a line of symmetry. Even so, the unique line of symmetry $L'$ would still contain at least one robot position without multiplicity, and the number of robot positions on $L'$ will be strictly less than the number of robots on the line of symmetry in the original configuration. Thus, the repeated execution of the procedure $SymmetryBreaking()$ guarantees to transform the configuration into an asymmetric configuration within a finite amount of time.
 \end{enumerate}

 \noindent The pseudo-code corresponding to this phase is given in Algorithm \ref{algo5}. Once the configuration is transformed into an asymmetric configuration, the robots execute $GatheringAsym()$. Suppose a robot multiplicity is created during the execution of $SymmetryBreaking()$. In that case, the robot that is farthest from $L \cup \lbrace c \rbrace$ and does not lie on a multiplicity is selected as the \textit{guard}. If there are multiple such farthest robots, then the robot with a higher configuration view is selected as a \textit{guard}. Note that such a robot position always exists.
\begin{algorithm}[h]\small
			\KwIn{$\mathcal C(t) \in \mathcal I_3$}
			\footnotesize
			\uIf{$\mathcal C(t) \in \mathcal{I}_{31}$}
			{
			Let $r$ be the robot on $L$ with the maximum configuration view \;
			Move $r$ towards any adjacent node that does not belong to $L$\;	
			}
			
			\ElseIf{$\mathcal C(t) \in \mathcal{I}_{32}$}
			{Let $r$ be the robot on $c$ \;
				Move $r$ towards any adjacent node\;
			}
		\caption{SymmetryBreaking()}
		\label{algo5}
		\end{algorithm}
	\begin{lemma} \label{consistency}
		If $C(0) \in \mathcal I\setminus \mathcal U$, then during the execution of $Gathering()$, $C(t) \notin \mathcal U$, for any $t>0$.
	\end{lemma}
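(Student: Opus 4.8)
The plan is to argue by case analysis on the class of $C(0) \in \mathcal I \setminus \mathcal U = \mathcal I_1 \cup \mathcal I_2 \cup \mathcal I_3$, using one structural fact throughout: since the \textit{meeting nodes} are fixed and every element of $Aut(C(t),l)$ preserves the labeling $l$, any automorphism of $C(t)$ sends \textit{meeting nodes} to \textit{meeting nodes} (the labels $1,2,3$ are exactly the meeting-node labels) and therefore restricts to an automorphism of $M$. Hence the symmetry of $C(t)$ is always inherited from $M$: an axis of reflection of $C(t)$ must be a line of symmetry of $M$, and a center of rotation of $C(t)$ must be a center of rotation of $M$. Since $\mathcal U$ consists precisely of the symmetric configurations whose unique axis $L$ (respectively center $c$) carries no robot and no \textit{meeting node}, it suffices to show that whenever $C(t)$ is symmetric during the run, its axis or center is occupied.

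For $C(0) \in \mathcal I_1$ this is immediate. If $C(0) \in \mathcal I_{11}$ then $M$ is asymmetric, so the restriction fact forces $Aut(C(t),l) = \{id\}$ for every $t$; thus $C(t)$ stays asymmetric and never enters $\mathcal U$. If $C(0) \in \mathcal I_{12}$ (respectively $\mathcal I_{13}$), the algorithm moves robots only toward a \textit{meeting node} on $L$ (respectively on $c$); as $M$ is fixed, that \textit{meeting node} remains on the axis (respectively center), so $C(t) \notin \mathcal U$ whether or not $C(t)$ is symmetric.

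For $C(0) \in \mathcal I_2$ I would invoke Lemma~\ref{guardproof} together with the observation following it: after $GuardSelection()$ the \textit{guard} is the unique robot farthest from $L \cup \{c\}$ and hence has no symmetric image, so $C(t)$ is asymmetric. I would then verify this persists through the remaining phases. In $MakeMultiplicity()$ the \textit{guard} is stationary and still uniquely farthest, so no inherited reflection or rotation can fix its position and the configuration stays asymmetric. In $GuardMovement()$ (the case $n_1 = 2$) the \textit{target meeting node} $m$ carries a multiplicity (label $3$) while its image under the symmetry of $M$ is an empty \textit{meeting node} (label $1$); this label mismatch rules out every inherited symmetry, so $C(t)$ again remains asymmetric and outside $\mathcal U$.

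The remaining and hardest case is $C(0) \in \mathcal I_3$, where $SymmetryBreaking()$ runs first. For $\mathcal I_{31}$, $M$ has a unique line of symmetry $L$, so $Aut(M)$ is generated by $\mathrm{refl}_L$; moving the maximum-view robot off $L$ to an adjacent node destroys $\mathrm{refl}_L$ (it creates a label mismatch at the former mirror node), leaving $C(t)$ asymmetric, after which the $\mathcal I_2$ reasoning applies. The delicate situation is $\mathcal I_{32}$, where $Aut(M)$ may be a larger cyclic or dihedral group, so a single move off the center $c$ need not remove all symmetry: a $90^{\circ}$ rotation may collapse to a residual reflection $L'$. I expect the main obstacle to be establishing the invariant that every symmetric configuration arising during the repeated calls to $SymmetryBreaking()$ still carries a robot on its residual axis or center. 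This holds because the moved robot lies on $L'$, so the configuration can never take the ``empty axis/center'' form demanded by $\mathcal U$; and since each call strictly decreases the number of robot positions on the line of symmetry, the process terminates in an asymmetric configuration, to which the $\mathcal I_2$ argument again applies. Combining the four cases yields $C(t) \notin \mathcal U$ for all $t > 0$.
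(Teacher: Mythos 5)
Your proposal is correct and follows essentially the same route as the paper: the same three-way case split on $\mathcal I_1,\mathcal I_2,\mathcal I_3$, the appeal to Lemma~\ref{guardproof} and the uniquely-farthest \emph{guard} for $\mathcal I_2$, and the termination/invariant argument for repeated symmetry breaking in $\mathcal I_3$. The only difference is that you make explicit the structural fact that every automorphism of $C(t)$ restricts to an automorphism of the fixed set $M$ (which the paper leaves implicit in its Case~1), and you spell out the $\mathcal I_{32}$ residual-reflection subtlety inside the lemma's proof rather than deferring it to the algorithm description.
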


	\begin{proof}
		According to Theorem \ref{i2}, the ungatherable configurations are those configurations
		\begin{enumerate}
     \itemsep=0.85pt
		    \item admitting a unique line of symmetry $L$ and without any robot or \textit{meeting nodes} on $L$.
		    \item admitting rotational symmetry with center of rotation $c$ and without a robot or \textit{meeting node} on~$c$.
		\end{enumerate}
		 	Depending on whether the initial configuration $C(0)$ is in $\mathcal I_1$, $\mathcal I_2$, or $\mathcal I_3$, the following cases are to be considered.
		\setcounter{case}{0}
		\begin{case}
		    Consider the case when $C(0) \in \mathcal I_1$. This includes all those configurations where the \textit{meeting nodes} are either asymmetric or symmetric with at least one \textit{meeting node} on $L \cup \lbrace c \rbrace$. Since the \textit{meeting nodes} are fixed nodes located on the nodes of the grid, $C(t) \notin \mathcal U$, for any $t>0$.  \vspace*{-2mm}
		\end{case}
		\begin{case}\label{case1s}
		Consider the case when $C(0) \in \mathcal I_2$. Since the configuration is asymmetric, there exists a unique \textit{key corner}. According to Lemma \ref{guardproof}, during the execution of the procedure $GuardSelection$, the configuration remains asymmetric, as the \textit{guard} contains no symmetric image with respect to $L \cup \lbrace c \rbrace$. Note that the \textit{guard} is the unique farthest robot from $L \cup \lbrace c \rbrace$. Since the \textit{guard} does not move in the \textit{Creating Multiplicity on Target Meeting node} phase and each \textit{non-guards} moves towards $MER_F$, the configuration remains asymmetric at any time $t>0$. During the execution of $GuardMovement()$, the \textit{guard} moves towards the \textit{target meeting node} and finalizes the \textit{gathering}. Hence, $C(t') \notin \mathcal U$, for any $t'>t$, where $t'$ denotes any instant of time after the execution of $GuardMovement()$.  \vspace*{-2mm}
		\end{case}
		\begin{case} \label{case2s}
			Consider the case when $C(0) \in \mathcal I_3$. There exists at least one robot position on $L \cup \lbrace c \rbrace$. At some time $t>0$, the configuration transforms into an asymmetric configuration by the execution of $SymmetryBreaking()$. Hence, the configuration remains asymmetric at $t'>t$, similar as in Case \ref{case1s}.  \vspace*{-2mm}
		\end{case}
		Hence, if $C(0) \notin \mathcal U$, then during the execution of $Gathering()$, $C(t) \notin \mathcal U$, for any $t>0$.
				\end{proof}
		
	\noindent Without loss of generality, let $m$ be the {\it target} {\it meeting node}, selected after \textit{guards} placement at time $t$. Let $d(t)=\sum\limits_{r_i\in R(t)}d(r_i(t),m)$.
	\begin{theorem}
		If $C(0) \in \mathcal I\setminus \mathcal U$ with $n \geq 2$, then by the execution of the algorithm $Gathering()$, the gathering over meeting nodes problem is solved within finite time.
	\end{theorem}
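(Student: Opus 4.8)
The plan is to prove the theorem by a case analysis on the class of the initial configuration, using the decomposition $\mathcal I \setminus \mathcal U = \mathcal I_1 \cup \mathcal I_2 \cup \mathcal I_3$, and in each case combining the invariance results already established with a monotone progress measure toward the \textit{target meeting node} $m$. For $C(0) \in \mathcal I_1$, Lemma~\ref{lemma1} guarantees that $m$ is fixed for all $t>0$, and by Observations~\ref{o1} and~\ref{o2} every active robot recomputes the same $m$. I would then use the potential $d(t)=\sum_{r_i\in R(t)} d(r_i(t),m)$: each non-null move decreases $d(\cdot)$ by one, and $d(\cdot)\ge 0$. Since the scheduler is fair, any robot not yet on $m$ is activated infinitely often and, because $m$ never changes, keeps stepping toward $m$ along a shortest path; hence $d(t)$ reaches $0$ after finitely many moves, which is exactly a \emph{final configuration}.

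For $C(0)\in\mathcal I_2$ I would argue phase by phase, invoking Lemma~\ref{consistency} throughout to ensure $C(t)\notin\mathcal U$ so that $m$ never becomes unreachable. In the \emph{Guard Selection and Placement} phase, Lemma~\ref{guardproof} keeps the guard uniquely identifiable, and the guard executes a bounded walk (away from $L\cup\lbrace c\rbrace$ until it lies outside $MER_F$, then toward its closest corner), so the phase terminates in finitely many moves and fixes the ordering $\mathcal O''$. In the \emph{Creating Multiplicity} phase the guard is stationary, so by Lemma~\ref{correckey} $m$ is invariant; the non-guards move sequentially toward $m$, so the potential $d(t)$ restricted to non-guards strictly decreases until all non-guards coincide on $m$, leaving $n_1=2$ distinct positions. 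In the \emph{Finalization} phase the guard detects via \textit{local-weak multiplicity detection} that it alone is off the multiplicity and moves to $m$ along a shortest path, which by the choice of $m$ as a closest \textit{meeting node} avoids every other \textit{meeting node}; this completes the \textit{gathering} in finitely many moves.

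For $C(0)\in\mathcal I_3$ the procedure $SymmetryBreaking()$ runs first, and I would show it produces an asymmetric configuration (up to multiplicities, i.e.\ one to which $GatheringAsym()$ applies) in finite time. In $\mathcal I_{31}$ the unique maximum-view robot on $L$ steps off $L$ and breaks the symmetry in a single move; in $\mathcal I_{32}$ the robot on $c$ steps off, and the argument recorded in the text shows that each such move strictly decreases the number of robot positions lying on the (possibly new) line of symmetry, so after finitely many repetitions the configuration is asymmetric. Thereafter the analysis of the $\mathcal I_2$ case applies verbatim, and the selected guard (chosen, if a multiplicity was created, as the farthest robot from $L\cup\lbrace c\rbrace$ not on a multiplicity) exists and is unique.

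The main obstacle is the \textit{asynchronous} scheduler: a robot may act on a stale snapshot (a pending move) taken before other robots moved. I would need to verify that such pending moves cannot redefine $m$, cannot displace the unique guard, and cannot increase the progress measure, so that the invariance statements of Lemmas~\ref{lemma1},~\ref{correckey} and~\ref{consistency} hold move-by-move and the potential argument still yields termination. The delicate point is the interaction between the sequential movement rule of $MakeMultiplicity()$ and a pending guard-phase move: showing that at most one robot is effectively "in transit" in a manner that simultaneously preserves uniqueness of the guard and invariance of $m$ is where the real care is required, and is the step I expect to demand the most attention.
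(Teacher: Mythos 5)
Your proposal follows essentially the same route as the paper's proof: the case split over $\mathcal I_1$, $\mathcal I_2$, $\mathcal I_3$, the appeal to Lemmas~\ref{lemma1}, \ref{guardproof}, \ref{correckey} and \ref{consistency} for invariance of the \textit{target meeting node} and the \textit{guard}, the strictly decreasing potential $d(t)=\sum_{r_i\in R(t)}d(r_i(t),m)$ for termination, and the reduction of $\mathcal I_3$ to $\mathcal I_2$ via $SymmetryBreaking()$. The only difference is that you explicitly flag the pending-move/asynchrony issue as needing further verification, a point the paper's proof passes over in silence, so your write-up is, if anything, slightly more candid about the same argument.
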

	\begin{proof}
	According to Lemma \ref{consistency}, if the initial configuration $C(0) \notin \mathcal U$, then $C(t) \notin \mathcal U$, for any $t>0$. Depending on whether the initial configuration $C(0)$ is in $\mathcal I_1$, $\mathcal I_2$, or $\mathcal I_3$, the following cases are to be considered.  \vspace*{-2mm}
	\setcounter{case}{0}
	\begin{case}
	    $C(t) \in \mathcal I_1$. According to Lemma \ref{lemma1}, the \textit{target meeting node} $m$ remains invariant. Let $t'>t$ be an arbitrary point of time at which at least one robot starts moving towards $m$. Therefore, $d(t')=\sum\limits_{r_i\in R(t')}d(r_i(t')$, $m)$ and $d(t')< d(t)$. This implies that eventually, all the robots will reach $m$ and the \textit{gathering} is finalized at $m$ within a finite amount of time.  \vspace*{-2mm}
	    \end{case}
	    \begin{case} \label{case2}
	    $C(t) \in \mathcal I_2$. First, consider the execution of $Make Multiplicity()$. Let $t' > t$ be an arbitrary point of time after the \textit{guard selection and placement} phase. Assume that at least one \textit{non-guard} robot has completed its LCM cycle at $t'$. We have $d(t')=\sum\limits_{r_i\in R(t')}d(r_i(t')$, $m)$. According to Lemma \ref{correckey}, the \textit{target meeting node} remains invariant. If there is at most one robot position which is not on $m$ at time $t'$, then the execution of $Guard Movement()$ is started. Otherwise, let $r$ be any \textit{non-guard} robot which has computed its LCM cycle at time $t'$. Since $r$ has moved at least one node closer to $m$, we have $d(t')<d(t)$. This implies that eventually, all the \textit{non-guard} robots will reach $m$ and execution of $Guard Movement()$ will be started.
	
		\noindent Next, we consider the execution of procedure $Guard Movement()$. Assume that at time $t''$, the \textit{guard} (say $r$) starts moving towards $m$ in a shortest path. At $t''$, $d(t'')=d(r(t''),m)$ (All other robots are already on $m$). Since $r$ would move at least one node closer to $m$, $d(t_1)<d (t'')$ at $t_1>t''$. Let $t_2>t_1>t''$ be the point of time at which $r$ reaches $m$. As $d(r(t_2),m)=0$, $d(t_2)=0$. Therefore, eventually all the robots will finalize {\it gathering} at $m$ . \vspace*{-1mm}
		\end{case}
		\begin{case}
		   $C(t) \in \mathcal I_3$. In this case, $SymmetryBreaking()$ is executed. The transformed configuration becomes either asymmetric, or it may admit a unique line of symmetry. By the repeated execution of $SymmetryBreaking()$, the configuration becomes asymmetric. The rest of the proof follows from Case \ref{case2}.
		\end{case}
	Hence, execution of the algorithm $Gathering()$ eventually solves the {\it gathering} {\it over meeting nodes problem} within finite time.\vspace*{-1mm}
	\end{proof}

	\section{Lower bounds} \label{s6}

	\noindent In this section, we study the efficiency of our algorithm in terms of the total number of moves executed by the robots. Let $n$ denote the total number of robots deployed on the nodes of the input grid graph. Consider a configuration $C(t)$, where the dimension of $ { MER}$  is $1 \times (n+1)$. Assume that there is a single \textit{meeting node} which is placed at one of the corners of $MER$, and all the other $n$ nodes of the grid are occupied by the $n$ robots (In Figure \ref{ana}, $m$ is the single \textit{meeting node} and $MER=AB$). Since the \textit{gathering problem} requires all the $n$ robots to be placed at the unique \textit{meeting node}, the total number of moves executed by the robots is given by,
	$1+2+3 \ldots +n= \dfrac{n(n-1)} {2}$.
	Hence, any algorithm solving the \textit{gathering over meeting nodes problem} requires $\Omega (n^2)$ moves.
	
	\begin{figure}[!h]
				\centering
		{			\includegraphics[width=0.40\columnwidth]{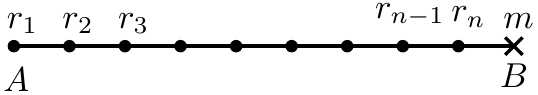}		}\vspace*{-1mm}
		\caption{A configuration showing the lower bound in terms of the number of movements required to finalize the \textit{gathering}.}
		\label{ana}
	\end{figure}

	\noindent Next, assume that $D$ = $max\lbrace p,q\rbrace$, where $p$ and $q$ are the dimensions of the initial $MER$ and if $D=\Theta(n)$, then any algorithm solves the \textit{gathering problem} in $\Omega (Dn)$ moves. Hence, we have the following theorem.
	
 \begin{theorem}
		Any gathering algorithm for solving the \textit{gathering over meeting nodes} problem requires $\Omega(Dn)$ moves.
	\end{theorem}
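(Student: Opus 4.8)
The plan is to combine the two lower-bound arguments already sketched in the text into a single asymptotic claim. The statement asserts an $\Omega(Dn)$ lower bound, where $D = \max\{p,q\}$ is the larger side of the initial $MER$ and $n$ is the number of robots. I would exhibit an explicit family of initial configurations for which every gathering algorithm is forced to expend $\Omega(Dn)$ total moves, since a lower bound only needs one bad instance per choice of parameters.

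First I would fix the parameters $D$ and $n$ and construct the configuration. Place a single \textit{meeting node} $m$ at one corner of a $1 \times D$ segment (so $MER = AB$ has its larger side equal to $D$), and distribute the $n$ robots along the grid so that each robot sits at Manhattan distance $\Theta(D)$ from $m$ --- for instance, clustering them near the far corner of the segment, or spreading them so that a constant fraction lie at distance at least $D/2$ from $m$. Because the unique \textit{meeting node} $m$ is the only admissible gathering location, every robot must eventually reach $m$, and by the triangle inequality for the Manhattan metric the number of unit moves made by each such robot is at least its initial distance to $m$. Summing over the $\Theta(n)$ robots that start at distance $\Omega(D)$ yields a total of $\Omega(Dn)$ moves, independent of the algorithm.

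The second half of the argument, which the preceding text foreshadows with the $1 \times (n+1)$ example giving $\Omega(n^2)$, handles the regime where $D$ and $n$ are comparable: when $D = \Theta(n)$, the $\Omega(n^2)$ bound from the staircase configuration is exactly $\Omega(Dn)$, so the two bounds agree and reinforce one another. I would note that the $\Omega(Dn)$ form is the right unified statement because it degrades gracefully: when the robots are packed into a small box ($D$ small) the bound is governed by $n$, whereas when they are spread over a long $MER$ ($D$ large) the bound is governed by the geometric spread $D$. Since $D$ is an intrinsic feature of the input (the side of the $MER$) and cannot be altered by the algorithm, no algorithm can avoid paying $\Omega(Dn)$ on the constructed instance.

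The main obstacle, and the step deserving the most care, is ensuring that a constant fraction of the $n$ robots genuinely start at distance $\Omega(D)$ from $m$ \emph{simultaneously with} $MER$ having larger side exactly $D$; these two requirements can compete, because forcing the box to be long while also keeping many robots far from the single corner-placed $m$ must be reconciled within the admissible (gatherable, non-$\mathcal U$) configurations. I would resolve this by choosing the $MER$ to be $1 \times D$ (or a thin rectangle) and positioning $n$ robots along it so that at least $n/2$ of them lie in the half of the segment farthest from $m$; each such robot contributes at least $D/2$ moves, giving $\tfrac{n}{2}\cdot\tfrac{D}{2} = \Omega(Dn)$. A final remark would confirm that this instance is solvable by our algorithm (so the bound is meaningful, not vacuous), and hence that the matching $O(Dn)$ upper bound makes the proposed algorithm asymptotically optimal.
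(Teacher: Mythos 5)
Your proposal is correct and follows essentially the same route as the paper: place a single \textit{meeting node} at a corner of the $MER$ and observe that the total number of moves of any algorithm is at least the sum of the robots' initial Manhattan distances to that node. If anything, your version is slightly more complete, since the paper only instantiates the bound in the regime $D=\Theta(n)$ (a $1\times(n+1)$ path yielding $\Omega(n^2)=\Omega(Dn)$), whereas you construct witness configurations with $D$ and $n$ varying independently.
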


	\subsection{Analysis of the algorithm gathering()}
	\begin{itemize}
	    \item During the \textit{Symmetry Breaking} phase, only $O(1)$ moves are required in order to break the symmetry.
	    \item In the \textit{Guard Selection and Placement} phase, at most, one robot may be required to move away from the initial $MER$. The total number of moves during this phase is $O(D)$.
	    \item In the \textit{Creating Multiplicity on Target Meeting Node} phase, all the \textit{non-guard} robots move towards the \textit{target meeting node} in a shortest path. The total number of moves in this phase is $O(Dn)$.
	    \item Finally, in the \textit{Finalization of Gathering} phase, only the \textit{guard} moves towards the \textit{target meeting node}. The number of moves in this phase is $O(D)$.
	\end{itemize}
	Hence, we have the following result.

	\begin{theorem}
	Algorithm Gathering() solves the \textit{gathering over meeting nodes} problem in $O(Dn)$ moves if the initial configuration belongs to the set $\mathcal I\setminus U$.
	\end{theorem}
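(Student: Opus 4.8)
The plan is to bound the move complexity phase by phase, reusing the correctness and invariance results already in hand. Since the correctness theorem established immediately above guarantees that $Gathering()$ reaches a final configuration within finite time, it suffices to account for the total displacement incurred in each of the four phases and show that the sum is $O(Dn)$. Throughout I would fix $D=\max\{p,q\}$, where $p\times q$ are the dimensions of the initial $MER$, and repeatedly invoke the elementary fact that any two nodes lying in a grid-aligned rectangle of dimensions $a\times b$ are at Manhattan distance at most $a+b$. The key structural point to establish first is that the region containing all robots and the target stays of diameter $O(D)$ throughout the execution.

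First I would dispose of the two overhead phases. By Algorithm~\ref{algo5}, the Symmetry Breaking phase moves a single designated robot one adjacent step in the principal cases ($\mathcal I_{31}$ and $\mathcal I_{32}$); in the worst case of iterated symmetry breaking it performs a bounded number of unit-length moves, contributing $O(1)$ (and in any case $O(n)$, which is absorbed into the final bound). For the Guard Selection and Placement phase, Lemma~\ref{guardproof} shows that exactly one robot---the guard---moves: it travels away from $L\cup\{c\}$ until it leaves $MER_F$ and becomes the unique farthest robot, then travels to its closest corner in the direction parallel to $L\cup\{c\}$. Since the side lengths of $MER_F$ are at most those of $MER$, the perpendicular leg has length $O(D)$ and the parallel leg has length $O(D)$, so this phase costs $O(D)$ moves. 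Crucially, the excursion enlarges $MER$ by only $O(D)$ in one direction, so the enlarged $MER$ still has both sides $O(D)$, which secures the diameter bound for the remaining phases.

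Next I would bound the dominant phase. By Lemma~\ref{correckey} the target meeting node $m$ is fixed once the guard is placed, and by Lemma~\ref{consistency} the configuration remains asymmetric, so by Algorithm~\ref{algo3} every non-guard robot always computes the same destination $m$ and advances along a shortest path toward it while the guard stays put. Because the target is invariant and each step strictly decreases $d(r_i,m)$ by one, a robot never reverses direction, and its total number of moves telescopes to its initial Manhattan distance to $m$, which is $O(D)$ by the diameter bound above. Summing over the at most $n$ non-guard robots gives $O(Dn)$; this is precisely the monotone decrease of $d(t)=\sum_{r_i\in R(t)}d(r_i(t),m)$ already exploited in the correctness proof. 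The Finalization of Gathering phase then moves only the guard along a shortest path to $m$ (Algorithm~\ref{algo4}), a further $O(D)$ moves. Adding the four contributions yields $O(1)+O(D)+O(Dn)+O(D)=O(Dn)$, as claimed.

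The point requiring the most care---and the main obstacle---is justifying that, under the asynchronous and oblivious model, no robot wastes moves by oscillating or by switching destinations mid-execution. An oblivious robot recomputes its destination from scratch in every activation, so the per-robot $O(D)$ bound must be argued to be a bound on \emph{total} displacement rather than merely a per-step bound. The resolution is exactly the invariance lemmas~\ref{lemma1} and~\ref{correckey} together with the asymmetry-preservation of Lemma~\ref{consistency}: they guarantee that whenever a robot is active it perceives the same target $m$ and the same shortest-path structure, so each completed move reduces its distance to $m$ by one and its accumulated displacement equals its initial distance to $m$. This converts the naive per-step reasoning into a genuine telescoping count, and the asymptotic optimality then follows immediately by comparison with the $\Omega(Dn)$ lower bound derived earlier in the section.
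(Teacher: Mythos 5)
Your proposal is correct and follows essentially the same route as the paper: the paper's argument is exactly the four-phase accounting ($O(1)$ for symmetry breaking, $O(D)$ for guard selection and placement, $O(Dn)$ for creating the multiplicity, $O(D)$ for finalization), summing to $O(Dn)$. Your additional care in justifying the telescoping via the invariance of the target meeting node (Lemmas~\ref{lemma1}, \ref{correckey}, \ref{consistency}) only makes explicit what the paper leaves implicit.
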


	\section{Conclusion} \label{s7}

	In this work, we have studied the \textit{gathering over meeting nodes problem} in an infinite grid where the robots have \textit{local-weak multiplicity detection} capability. A subset of all the initial configurations has been proved to be ungatherable. A deterministic distributed algorithm for solving the \textit{gathering} problem has been proposed for the remaining configurations with $n\geq 2$, where $n$ is the number of robots in the system. We have discussed the efficiency of the proposed algorithm in terms of the total number of moves executed by the robots. We have proved that the algorithm solves the \textit{gathering over meeting nodes} problem in $O(Dn)$ moves, where $D$ is the length of the longer side of the initial $MER$.
	
\medskip
	\noindent {\it Future Works:} One immediate future direction of work would be to consider the optimal algorithms for \textit{gathering}. The optimal criterion is to minimize the maximum distance traveled by any robot. Another direction of future interest would be to consider a randomized algorithm for breaking the symmetry when there are no robots or \textit{meeting nodes} on the line of symmetry and the center of rotation. It would be interesting to consider \textit{gathering} algorithms without any \textit{multiplicity detection capability}. We may also consider the \textit{gathering} problem for the configurations, where the initial configuration may contain multiplicities. If the initial deployment has multiple robots in some nodes, the problem requires a different solution.
	
\vfil\eject

	\end{document}